\newtheorem{definition}{Definition}
\newtheorem{theorem}[definition]{Theorem}
\newtheorem{lemma}[definition]{Lemma}
\newtheorem{example}[definition]{Example}
\newtheorem{corollary}[definition]{Corollary}
\newtheorem{construction}[definition]{Construction}
\DeclareMathOperator{\defi}{def}
\newcommand{\defeq}{\overset{\defi}{=}}
\DeclareMathOperator{\rank}{rank}
\newcommand{\F}[1]{\mathbb F_{#1}}
\newcommand{\Fq}{\F{q}}
\newcommand{\M}[2][\empty]{
  \ifthenelse{\equal{#1}{\empty}}
    {\ensuremath{\mathbf{#2}}}
    {\ensuremath{{\mathbf{#2}}_{#1}}}
}
\newcommand{\CYC}{\ensuremath{\mathcal{C}}}
\newcommand{\LIN}[4]{\ensuremath{[#1,#2,#3]_{#4}}}
\newcommand{\inter}[1]{\ensuremath{[#1]}}
\newcommand{\defset}[2][\empty]{
  \ifthenelse{\equal{#1}{\empty}}
    {\ensuremath{D_{#2}}}
    {\ensuremath{D^{[#1]}_{#2}}}
}
\newcommand{\loc}[1]{
  \ifthenelse{\equal{#1}{\empty}}
    {\ensuremath{r}}
    {\ensuremath{r_{#1}}}
}
\newcommand{\locset}[1]{\ensuremath{\mathcal{T}_{#1}}}
\newcommand{\minus}{\scalebox{0.5}[0.7]{\( - \)}}
\newcommand{\minusb}{\scalebox{0.6}[0.8]{\( - \)}}
\newcommand{\GCCOUT}[1]{\ifthenelse{\equal{#1}{\empty}}{\ensuremath{\mathcal{A}}}{\ensuremath{\mathcal{A}^{\langle #1 \rangle}}}}
\newcommand{\GCCOUTb}[1]{\ifthenelse{\equal{#1}{\empty}}{\ensuremath{\overline{\mathcal{A}}}}{\ensuremath{\overline{\mathcal{A}}^{\langle #1 \rangle}}}}
\newcommand{\GCCOUTn}[1]{\ifthenelse{\equal{#1}{\empty}}{\ensuremath{n_a}}{\ensuremath{n_{a,#1}}}}
\newcommand{\GCCOUTk}[1]{\ifthenelse{\equal{#1}{\empty}}{\ensuremath{k_a}}{\ensuremath{k_{a,#1}}}}
\newcommand{\GCCOUTd}[1]{\ifthenelse{\equal{#1}{\empty}}{\ensuremath{d_a}}{\ensuremath{d_{a,#1}}}}
\newcommand{\GCCOUTextq}[1]{\ifthenelse{\equal{#1}{\empty}}{\ensuremath{l}}{\ensuremath{l_{#1}}}}
\newcommand{\GCCIN}[1]{\ifthenelse{\equal{#1}{\empty}}{\ensuremath{\mathcal{B}}}{\ensuremath{\mathcal{B}^{\langle #1 \rangle}}}}
\newcommand{\GCCINn}[1]{\ifthenelse{\equal{#1}{\empty}}{\ensuremath{n_b}}{\ensuremath{n_{b,#1}}}}
\newcommand{\GCCINk}[1]{\ifthenelse{\equal{#1}{\empty}}{\ensuremath{k_b}}{\ensuremath{k_{b,#1}}}}
\newcommand{\GCCINd}[1]{\ifthenelse{\equal{#1}{\empty}}{\ensuremath{d_b}}{\ensuremath{d_{b,#1}}}}
\newcommand{\GCCn}{\ensuremath{n}}
\newcommand{\GCCk}{\ensuremath{k}}
\newcommand{\GCCd}{\ensuremath{d}}
\DeclareDocumentCommand \genmat { oo }
{
  \IfNoValueTF {#2}
  { 
    \IfNoValueTF {#1}
    {   
      \ensuremath{\mathbf{G}}
    }
    { 
      \ensuremath{\mathbf{G}_{#1}}    
    }
  }
  { 
    \ensuremath{\overline{\mathbf{G}}_{#1}}           
  }
}
\newcommand{\CMoptimal}{alphabet-optimal}
\begin{document}

\title{Bounds and Constructions of Codes with Multiple Localities}

\IEEEoverridecommandlockouts

\author{\IEEEauthorblockN{Alexander Zeh and Eitan Yaakobi}\thanks{This work has been supported by German Research Council (Deutsche Forschungsgemeinschaft, DFG) under grant ZE1016/1-1 and in part by the Israel Science Foundation (ISF) Grant No. 1624/14.}
\IEEEauthorblockA{Computer Science Department\\
Technion---Israel Institute of Technology\\
\texttt{alex@codingtheory.eu}, \texttt{yaakobi@cs.technion.ac.il}}
}

\maketitle

\begin{abstract}
This paper studies bounds and constructions of locally repairable codes (LRCs) with multiple localities so-called multiple-locality LRCs (ML-LRCs). In the simplest case of two localities some code symbols of an ML-LRC have a certain locality while the remaining code symbols have another one.

We extend two bounds, the Singleton and the alphabet-dependent upper bound on the dimension of Cadambe--Mazumdar for LRCs, to the case of ML-LRCs with more than two localities. Furthermore, we construct Singleton-optimal ML-LRCs as well as codes that achieve the extended alphabet-dependent bound. 
We give a family of binary ML-LRCs based on generalized code concatenation that is optimal with respect to the alphabet-dependent bound.
\end{abstract}
\begin{IEEEkeywords}
 Alphabet-dependent bound, distributed storage, generalized code concatenation, locally repairable codes, Pyramid code, Singleton bound, shortening
\end{IEEEkeywords}

\section{Introduction}
Locally repairable codes (LRCs) can recover data from erasure(s) by accessing a small number $\loc{}$ of erasure-free code symbols and therefore increase the efficiency of the repair-process in large-scale distributed storage systems.
For a classical LRC, all $n$ code symbols depend on at most $\loc{}$ other symbols. Basic properties and bounds of LRCs were identified by Gopalan~\textit{et al.}~\cite{gopalan_locality_2012}, Oggier and Datta~\cite{oggier_self-repairing_2011} and Papailiopoulos and Dimakis~\cite{papailiopoulos_locally_2014}. 
The majority of the constructions of LRCs requires a large field size (see, e.g.,~\cite{huang_pyramid_2007, tamo_optimal_2013, silberstein_optimal_2013}). Tamo and Barg gave a family of optimal LRCs for which the required field size is slightly larger than the code length in~\cite{tamo_family_2014}. Cadambe and Mazumdar~\cite{cadambe_upper_2013-1} gave an upper bound on the dimension of an $\loc{}$-local code which takes the field size into account. LRCs with small alphabet size were constructed in \cite{goparaju_binary_2014, huang_cyclic_2015,zeh_optimal_2015,silberstein_optimal_2015}.

We generalize the Singleton bound of Gopalan~\textit{et al.}~\cite{gopalan_locality_2012} as well as the alphabet-dependent Cadambe--Mazumdar bound~\cite{cadambe_upper_2013-1} to so-called multiple-locality LRCs (ML-LRCs), i.e., LRCs where the locality among the code symbols can be different. There exist several practical scenarios, where different localities are relevant, e.g., a distributed storage system, where a part of the stored data is accessed more frequently and therefore requires a smaller locality.

We show that shortening a given Singleton- respectively \CMoptimal{} LRC (i.e., a LRC that achieves the corresponding bound with equality and where all $n$ symbols have the same locality), a Singleton- respectively \CMoptimal{} ML-LRC is obtained. For the case of two localities, we give explicit algorithms that return a parity-check matrix of an optimal linear code for both bounds. Both algorithms can be easily extended to the general case of more than two localities. The adaption of the Pyramid code construction~\cite{huang_pyramid_2007} to the case of multiple localities is outlined, because they require a (slightly) smaller field-size as the construction of~\cite{tamo_family_2014} even though the locality-property is only guaranteed for the information symbols). Furthermore, we identify a family of binary \CMoptimal{} ML-LRCs using generalized code concatenation (see e.g., \cite{blokh_coding_1974, zinoviev_generalized_1976}).

This paper is structured as follows. In Section~\ref{sec_Preliminaries}, we introduce notation and recall the existing Singleton and alphabet-dependent bound of Cadambe--Mazumdar for LRCs (Thm.~\ref{theo_GenSingleton} and Thm.~\ref{theo_CMBound}).
Section~\ref{sec_LocalityCode} provides the definition of ML-LRCs and the Singleton bound for the case of $s=2$ different localities (Thm.~\ref{theo_SingletonDifferentLocalities}). Thm.~\ref{theo_SingletonDifferentLocalitiesSeveral} is the Singleton bound for the general case of linear ML-LRCs with $s \geq 2$ different localities. The proof is given in the appendix. An explicit construction (Algorithm~\ref{algo_ConstructTwoDiffLocalitiesLRC}) of a Singleton-optimal ML-LRC with two different localities is proven in Thm.~\ref{theo_SingletonOptimalMLLRC}. Basically, Algorithm~\ref{algo_ConstructTwoDiffLocalitiesLRC} shortens a given Singleton-optimal LRC code. The adaption of Pyramid codes~\cite{huang_pyramid_2007} for different localities among the information symbols is outlined at the end of Section~\ref{sec_LocalityCode}. Similar to Section~\ref{sec_LocalityCode}, we start the description of the extended alphabet-dependent bound (for non-linear) LRCs in Thm.~\ref{thm_AlphabetDiffLocalities} in Section~\ref{sec_Alphabet-DependentBound} for two different localities. The general case ($s \geq 2$) is contained in Thm.~\ref{thm_AlphabetDiffLocalitiesSeveral} and the proof can be found in the appendix. Our proof requires linearity, but it is straightforward to extend the proof-idea to the non-linear case similar to the graph approach as in Tamo--Barg~\cite[Proof of Thm. A.1]{tamo_family_2014}. Thm.~\ref{thm_ConstructionOptimalMLLRC} proves an explicit construction (Algorithm~\ref{algo_ConstructTwoDiffLocalitiesLRCAlphabetDependent}) of an ML-LRC that is optimal with respect to new alphabet-dependent bound (for two different localities). Similar to Algorithm~\ref{algo_ConstructTwoDiffLocalitiesLRC}, Algorithm~\ref{algo_ConstructTwoDiffLocalitiesLRCAlphabetDependent} shortens a given \CMoptimal{} LRC. Both algorithms (Algorithm~\ref{algo_ConstructTwoDiffLocalitiesLRC} and \ref{algo_ConstructTwoDiffLocalitiesLRCAlphabetDependent}) can be adapted to the case of an ML-LRC with more than two different localities.

Section~\ref{sec_OptimalBinaryGCC} contains a family of $\loc{}$-local binary LRCs based on generalized code concatenation. It is shown that this family is optimal with respect to the Cadambe--Mazumdar bound for $\loc{}=2,3$ (see Construction~\ref{constr_GCC1} and Thm.~\ref{theo_OptimalityConstrGCC2}). We conclude the paper in Section~\ref{sec_Concl}.

\section{Preliminaries} \label{sec_Preliminaries}
For two integers $a,b$ with $a < b$, let $\inter{a,b}$ denote the set of integers $\{a,a+1,\dots,b\}$ and let $\inter{b}$ be the shorthand notation for $\inter{1,b}$.
Let $\Fq$ denote the finite field of order $q$, where $q$ is a prime power. A linear $\LIN{n}{k}{d}{q}$ code of length $n$, dimension $k$ and minimum Hamming distance $d$ over $\Fq$ is denoted by a calligraphic letter like $\CYC$. The parameters of a non-linear code $\CYC$ of length $n$, over an alphabet with size $q$ with dimension $k = \log |\CYC| / \log q$ are identified by $(n,k,d)_q$.
For a set $\mathcal{I} \subseteq \inter{n}$ and a matrix $\mathbf{G} \in \Fq^{k \times n}$, let $\mathbf{G}_{\mathcal{I}}$ denote the submatrix of $\mathbf{G}$ that consists of the columns indexed by $\mathcal{I}$. Denote by $\mathbf{I}_{n}$ the $n \times n$ identity matrix.
\begin{definition}[Locally Repairable Code (LRC)] \label{def_Locality}
A linear \LIN{n}{k}{d}{q} code $\CYC$ is $\loc{}$-local if all $n$ code symbols are a linear combination of at most $\loc{}$ other code symbols.
\end{definition}
The following generalization of the Singleton bound for LRCs was among others proven in~\cite[Thm. 3.1]{kamath_codes_2014}, \cite[Construction 8 and Thm. 5.4]{tamo_family_2014} and \cite[Thm. 2]{prakash_optimal_2012}.
\begin{theorem}[Generalized Singleton Bound] \label{theo_GenSingleton}
The minimum Hamming distance $d$ of an \LIN{n}{k}{d}{q} $r$-local code $\CYC$ is bounded from above by
\begin{equation*} 
d \leq n-k+2-\left \lceil \frac{k}{r} \right \rceil.
\end{equation*}
\end{theorem}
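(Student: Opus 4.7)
The plan is to exploit the standard rank-support duality for linear codes: for any generator matrix $\genmat \in \Fq^{k \times n}$ of $\CYC$ and any $T \subseteq \inter{n}$ with $\rank(\genmat_T) \leq k-1$, there is a nonzero vector in the left null space of $\genmat_T$, which yields a nonzero codeword vanishing on $T$. Hence $d \leq n - |T|$, and it suffices to construct such a $T$ of size at least $k + \lceil k/\loc{} \rceil - 2$.

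I would build $T$ greedily from locality groups. By Definition~\ref{def_Locality}, every $j \in \inter{n}$ belongs to some $\Gamma(j) \subseteq \inter{n}$ with $j \in \Gamma(j)$ and $|\Gamma(j)| \leq \loc{}+1$ such that the column of $\genmat$ indexed by $j$ lies in the span of the columns indexed by $\Gamma(j) \setminus \{j\}$. Starting from $T_0 = \emptyset$, at iteration $i = 1, \dots, \lceil k/\loc{} \rceil - 1$ I pick some $j \notin T_{i-1}$ whose column is not in the column span of $\genmat_{T_{i-1}}$ — such a $j$ exists because the current rank is at most $(i-1)\loc{} < k$ — and set $T_i := T_{i-1} \cup \Gamma(j)$. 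Writing $a_i := |\Gamma(j) \setminus T_{i-1}| \geq 1$, one has $|T_i| - |T_{i-1}| = a_i$, while the built-in dependency inside $\Gamma(j)$ forces $\rank(\genmat_{T_i}) - \rank(\genmat_{T_{i-1}}) \leq a_i - 1$. Hence the deficit $|T_i| - \rank(\genmat_{T_i})$ grows by at least one per step, and after $\lceil k/\loc{} \rceil - 1$ iterations satisfies $|T| - \rank(\genmat_T) \geq \lceil k/\loc{} \rceil - 1$ while $\rank(\genmat_T) \leq (\lceil k/\loc{} \rceil - 1)\loc{} \leq k - 1$.

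If at this point the rank is strictly below $k - 1$, I would augment $T$ with arbitrary single coordinates, each increasing the rank by exactly one, stopping once the rank equals $k-1$. This preserves the deficit, so the final set satisfies $|T| \geq (k-1) + (\lceil k/\loc{} \rceil - 1) = k + \lceil k/\loc{} \rceil - 2$. Rank-support duality then delivers the claimed bound $d \leq n - k + 2 - \lceil k/\loc{} \rceil$.

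The main obstacle is the arithmetic check $(\lceil k/\loc{} \rceil - 1)\loc{} \leq k - 1$, needed so that the greedy procedure cannot accidentally overshoot the rank-$k$ barrier. Writing $k = \ell\loc{} + s$ with $0 \leq s < \loc{}$, the bound is $\ell\loc{} = k - s \leq k - 1$ when $s \geq 1$ and $(\ell - 1)\loc{} = k - \loc{} \leq k - 1$ when $s = 0$. Everything else is routine linear-algebra bookkeeping, and the essential mechanism — the one-unit deficit gain per locality group — is exactly what is responsible for the $-\lceil k/\loc{} \rceil$ penalty on the right-hand side of the bound.
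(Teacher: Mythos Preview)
Your argument is correct. The paper does not give its own proof of Theorem~\ref{theo_GenSingleton} (it is cited from prior work), but your approach coincides with the method the paper employs for the multi-locality generalisation in Theorem~\ref{theo_SingletonDifferentLocalities}: greedily accumulate locality groups to build a column set whose rank deficit grows by one per group, enlarge it to rank $k-1$, and then apply Lemma~\ref{lem_RankGeneratorMatrix} (your ``rank--support duality'').
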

Throughout this paper we call an $\loc{}$-local code \textit{Singleton-optimal} if its minimum Hamming distance achieves the bound in Thm.~\ref{theo_GenSingleton} with equality.
For $r \geq k $ Thm.~\ref{theo_GenSingleton} coincides with the classical (locality-unaware) Singleton bound and then a Singleton-optimal code is called maximum distance separable (MDS).

In addition to the generalization of the bound in Thm.~\ref{theo_GenSingleton}, we extend the bound on the dimension of an LRC given by Cadambe and Mazumdar~\cite[Thm. 1]{cadambe_upper_2013-1}, which takes the alphabet into account.
\begin{theorem}[Cadambe--Mazumdar Bound] \label{theo_CMBound}
The dimension $k$ of an $r$-local code $\CYC$ of length $n$ and minimum Hamming distance $d$ as in Definition~\ref{def_Locality} is upper bounded by 
\begin{equation} \label{eq_CMBounda}
k \leq \min_{t \in \inter{t^{\star}}} k_t ,
\end{equation}
where 
\begin{equation} \label{eq_CMBoundb}
k_t \defeq  tr + k_{\text{opt}}^{(q)}(n-t(\loc{}+1), d),
\end{equation}
where $k_{opt}^{(q)}(n, d)$ is the largest possible dimension of a code of length $n$, for a given alphabet size $q$ and a given minimum Hamming distance $d$ and $t^{\star} = \min \left( \left \lceil n/(\loc{}+1) \right \rceil, \left \lceil k/\loc{} \right \rceil \right)$.
\end{theorem}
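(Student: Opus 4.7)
The plan is, for each fixed $t \in \inter{t^{\star}}$, to construct a subset $S_t \subseteq \inter{n}$ that yields $k \leq k_t$ via a shortening argument, and then take the minimum over $t$. First I would enlarge every local repair set $R_i \ni i$ to contain exactly $\loc{}+1$ coordinates by adjoining arbitrary indices; this is harmless because $c_i$ still depends only on the original at most $\loc{}$ neighbors, so $\rk(\mathbf{G}_{R_i}) \leq \loc{}$ for the enlarged set as well. Starting from $S_0 = \emptyset$, at step $j = 1, \dots, t$ I iteratively pick any index $i_j \notin S_{j-1}$ and set $S_j \defeq S_{j-1} \cup R_{i_j}$; such an $i_j$ exists because $|S_{j-1}| \leq (j-1)(\loc{}+1) < n$ whenever $j \leq t^{\star} \leq \lceil n/(\loc{}+1) \rceil$.

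Let $a_j \defeq |R_{i_j} \cap S_{j-1}|$ and $A \defeq \sum_{j=1}^{t} a_j$. A direct count gives $|S_t| = t(\loc{}+1) - A$, and since the $i_j$-th column of a generator matrix $\mathbf{G}$ of $\CYC$ lies in the span of the $\loc{}$ columns indexed by $R_{i_j} \setminus \{i_j\}$, the rank increment at step $j$ satisfies $\rk(\mathbf{G}_{S_j}) - \rk(\mathbf{G}_{S_{j-1}}) \leq \loc{} - a_j$, so $\rk(\mathbf{G}_{S_t}) \leq t\loc{} - A$. The shortened subcode $\CYC' \defeq \{ c \in \CYC : c|_{S_t} = 0 \}$ then has length $n - t(\loc{}+1) + A$, dimension at least $k - t\loc{} + A$, and minimum Hamming distance at least $d$, so by the very definition of $k_{\text{opt}}^{(q)}$,
\[
k - t\loc{} + A \;\leq\; k_{\text{opt}}^{(q)}\bigl(n - t(\loc{}+1) + A,\, d\bigr).
\]

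To absorb the overlap term $A$ I would invoke the one-Lipschitz inequality $k_{\text{opt}}^{(q)}(n'+1, d) \leq k_{\text{opt}}^{(q)}(n', d) + 1$, which follows directly by shortening a length-$(n'+1)$ optimal code at any single coordinate. Iterating gives $k_{\text{opt}}^{(q)}(n - t(\loc{}+1) + A, d) \leq k_{\text{opt}}^{(q)}(n - t(\loc{}+1), d) + A$, and the two occurrences of $A$ then cancel to yield $k \leq t\loc{} + k_{\text{opt}}^{(q)}(n - t(\loc{}+1), d) = k_t$. Taking the minimum over $t \in \inter{t^{\star}}$ establishes the claim. The delicate point is precisely this cancellation: if the greedy repair sets overlap then $|S_t|$ falls short of $t(\loc{}+1)$ and the naive shortening estimate is strictly weaker than $k_t$, but every unit of overlap costs one coordinate on the right while freeing one dimension on the left, so the argument goes through without ever having to guarantee that pairwise disjoint repair sets of size $\loc{}+1$ actually exist in $\CYC$.
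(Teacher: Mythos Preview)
Your argument is correct and follows essentially the same route as the paper, which derives Thm.~\ref{theo_CMBound} from Lemmas~\ref{lem_CM1} and~\ref{lem_CM2} (originally due to Cadambe--Mazumdar): greedily pick coordinates $a_0,\dots,a_{t-1}$ not yet covered, collect their repair sets, and observe that the resulting index set has entropy (or rank, in the linear case) at most $t\loc{}$, so shortening gives a code of length $n-t(\loc{}+1)$ and dimension at least $k-t\loc{}$.

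The one genuine difference is how the overlap between successive repair sets is handled. The paper (see Algorithm~\ref{algo_ConstructEntropySet}) pads the set at every step with arbitrary additional coordinates $\mathcal{S}^{\langle m\rangle}$ so that $|\mathcal{I}|=t(\loc{}+1)$ on the nose; the entropy bound $H(\mathcal{I})\leq t\loc{}$ then follows because the $t$ chosen centers $a_m$ are redundant and $|\mathcal{I}\setminus\{a_0,\dots,a_{t-1}\}|=t\loc{}$. You instead keep the raw union $S_t$ of size $t(\loc{}+1)-A$ and rank at most $t\loc{}-A$, and absorb the overlap $A$ after the fact via the one-Lipschitz inequality $k_{\text{opt}}^{(q)}(n'+1,d)\leq k_{\text{opt}}^{(q)}(n',d)+1$. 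The two devices are interchangeable: padding the set by $A$ coordinates before shortening is exactly the same as invoking the Lipschitz estimate $A$ times after shortening. Your version has the minor advantage of making explicit that disjoint repair groups are never actually required.
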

An $\loc{}$-local code is called \textit{\CMoptimal{}} if its dimension meets the bound in Thm.~\ref{theo_CMBound} with equality.

\section{Multiple-Locality LRCs} \label{sec_LocalityCode}
In this section we first define ML-LRCs with two localities and provide a Singleton-like bound. We generalize the definition to ML-LRCs with more than two localities and extend the bound. Afterwards, Singleton-optimal all-symbol ML-LRCs are constructed and the construction of Pyramid codes~\cite{huang_pyramid_2007} with information-symbol locality is extended.

For a linear code of length $n$, we consider in a first step the case where all code symbols in $\locset{1} \subset \inter{n}$ have locality $\loc{1}$ and the remaining code symbols in $\locset{2} = \inter{n} \setminus \locset{1}$ have locality $\loc{2}$.
\begin{definition}[ML-LRC with Two Localities] \label{def_CodeDiffLocalities}
Let $\locset{1} \subset \inter{n}$ and $\locset{2} = \inter{n} \setminus \locset{1}$ be two distinct sets with $ |\locset{i}| = n_i,$ for $i=1,2$. Let two integers $\loc{1},\loc{2} $ with $\loc{1} < \loc{2}$ be given.
A linear $\LIN{n}{k}{d}{q}$ code is $((n_1, \loc{1}), (n_2, \loc{2}))$-local if all $n_i$ code symbols within $\locset{i}$ are a linear combination of at most $\loc{i}$ other code symbols for $i=1,2$.
\end{definition}
Note, that a code symbol belongs to $\locset{1}$ if it is a linear combination of at most $\loc{1}$ other symbols. Clearly, an $((n_1, \loc{1}), (n_2, \loc{2}))$-local code is also an $\loc{2}$-local code and an $\loc{1}$-local code is an $((n_1, \loc{1}), (n_2, \loc{2}))$-local code.
\subsection{Singleton Bound}
The following lemma is needed to prove the Singleton-like bound on the minimum Hamming distance of an $((n_1, \loc{1}), (n_2, \loc{2}))$-local code.
\begin{lemma}[Rank of Generator Matrix] \label{lem_RankGeneratorMatrix}
Let $\mathcal{I} \subseteq \inter{n}$. All $k \times |\mathcal{I}|$ submatrices $\mathbf{G}_{\mathcal{I}}$ of a generator matrix $\mathbf{G}$ of an $\LIN{n}{k}{d}{q}$ code $\CYC$ of rank smaller than $k$ must have at most $n-d$ columns. 
\end{lemma}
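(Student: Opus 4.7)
The plan is to prove the contrapositive: if $|\mathcal{I}| > n-d$, then $\mathbf{G}_{\mathcal{I}}$ has full row rank $k$. Equivalently, we show that a rank deficiency of $\mathbf{G}_{\mathcal{I}}$ forces the existence of a nonzero codeword whose support lies outside $\mathcal{I}$, and then invoke the minimum distance of $\CYC$.

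First I would assume $\rank(\mathbf{G}_{\mathcal{I}}) < k$, which means the rows of $\mathbf{G}_{\mathcal{I}}$ are linearly dependent. Hence there exists a nonzero row vector $\mathbf{u} \in \Fq^k$ with $\mathbf{u}\,\mathbf{G}_{\mathcal{I}} = \mathbf{0}$. Consider the codeword $\mathbf{c} \defeq \mathbf{u}\,\mathbf{G} \in \CYC$. Since $\mathbf{G}$ has full rank $k$ (as the generator matrix of an $\LIN{n}{k}{d}{q}$ code) and $\mathbf{u}\neq \mathbf{0}$, the codeword $\mathbf{c}$ is nonzero. By construction, the coordinates of $\mathbf{c}$ indexed by $\mathcal{I}$ all vanish, so the support of $\mathbf{c}$ is contained in $\inter{n}\setminus\mathcal{I}$, yielding $\mathrm{wt}(\mathbf{c}) \leq n - |\mathcal{I}|$.

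Next I would apply the minimum-distance property of $\CYC$: every nonzero codeword has Hamming weight at least $d$, so $d \leq \mathrm{wt}(\mathbf{c}) \leq n - |\mathcal{I}|$, which rearranges to $|\mathcal{I}| \leq n-d$, as claimed.

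There is essentially no obstacle here; the statement is a standard fact linking rank deficiency of column submatrices of a generator matrix to the existence of low-weight codewords. The only subtlety is to be explicit that a generator matrix of an $\LIN{n}{k}{d}{q}$ code has rank exactly $k$, so that $\mathbf{u}\neq\mathbf{0}$ indeed produces a nonzero codeword $\mathbf{c}$; otherwise the weight bound could be trivially violated.
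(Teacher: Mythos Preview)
Your proof is correct and follows essentially the same idea as the paper's: exhibit a nonzero codeword that vanishes on $\mathcal{I}$ and invoke the minimum distance. The paper obtains this codeword by first row-reducing $\mathbf{G}$ so that some row is identically zero on the columns in $\mathcal{I}$, whereas you obtain it directly as $\mathbf{u}\,\mathbf{G}$ for a nonzero $\mathbf{u}$ in the left kernel of $\mathbf{G}_{\mathcal{I}}$; this is only a cosmetic difference.
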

\begin{proof}
W.l.o.g. we assume $\mathcal{I} = \inter{|\mathcal{I}|}$ and let $t = \rank \mathbf{G}_{\mathcal{I}}$. Clearly, $|\mathcal{I}| \geq t$. The generator matrix can be transformed to the following form:
\begin{equation*}
\mathbf{G}= 
\begin{pmatrix}  
\mathbf{I}_{t} & \mathbf{A}_1  & \mathbf{A}_2 \\ 
\multicolumn{2}{c}{\mathbf{0}} & \mathbf{A}_3 
\end{pmatrix},
\end{equation*}
where $\mathbf{A}_1 \in \Fq^{t \times (|\mathcal{I}|-t)}$, $\mathbf{A}_2 \in \Fq^{t \times (n-|\mathcal{I}|)}$, and $\mathbf{A}_3 \in \Fq^{(k-t) \times (n-|\mathcal{I}|)}$. The $(t+1)$th row of $\mathbf{G}$ (first $|\mathcal{I}|$ positions are zero) has Hamming weight at least $d$ (because it is a codeword of $\CYC$) and therefore $|\mathcal{I}| \leq n-d$.
\end{proof}

\begin{theorem}[Singleton Bound for ML-LRCs with Two Localities] \label{theo_SingletonDifferentLocalities}
Let the parameters as in Definition~\ref{def_CodeDiffLocalities} be given.

Then, the minimum Hamming distance of an $\LIN{n}{k}{d}{q}$ $((n_1, \loc{1}), (n_2, \loc{2}))$-local code $\CYC$ is upper bounded by
\begin{align} \label{eq_SingletonExp1}
d & \leq  n - k + 2 - \left \lceil \frac{n_1}{\loc{1}+1} \right \rceil - \left \lceil \frac{k- \loc{1}\left \lceil \frac{n_1}{\loc{1}+1} \right \rceil }{\loc{2}} \right \rceil.
\end{align}
\end{theorem}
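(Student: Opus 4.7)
The plan is to apply Lemma~\ref{lem_RankGeneratorMatrix}: if I exhibit a set $\mathcal{I} \subseteq \inter{n}$ with $\rank \mathbf{G}_{\mathcal{I}} < k$ and $|\mathcal{I}| \geq k - 2 + a + b$, where $a \defeq \lceil n_1/(\loc{1}+1)\rceil$ and $b \defeq \lceil (k - \loc{1} a)/\loc{2}\rceil$, then $|\mathcal{I}| \leq n - d$ and hence~\eqref{eq_SingletonExp1}. The set $\mathcal{I}$ is grown by a greedy procedure extending the Gopalan et al.\ argument: each \emph{locality iteration} picks a target $i \notin \mathcal{I}$ whose column is linearly independent of $\mathbf{G}_{\mathcal{I}}$ and adjoins $\{i\} \cup R_i$, where $R_i$ is a local repair set for $i$. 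Since the column of $i$ lies in the span of the columns indexed by $R_i$, one iteration grows the rank by at most $|R_i \setminus \mathcal{I}|$ while the size grows by $|R_i \setminus \mathcal{I}| + 1$, producing a net gain of at least $1$ in the quantity $|\mathcal{I}| - \rank \mathbf{G}_{\mathcal{I}}$.

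Phase one runs $\loc{1}$-local iterations with targets in $\locset{1}$ and $|R_i| = \loc{1}$. Each such iteration removes between $1$ and $\loc{1}+1$ elements from $\locset{1} \setminus \mathcal{I}$, so at least $a$ iterations are feasible before $\locset{1} \subseteq \mathcal{I}$, and each raises the rank by at most $\loc{1}$. After $a$ iterations the rank is at most $\loc{1} a$ and $|\mathcal{I}| \geq a(\loc{1}+1)$. Phase two runs $\loc{2}$-local iterations with targets in $\locset{2}$ and $|R_i| = \loc{2}$: perform $b-1$ such iterations (each adding at most $\loc{2}$ to the rank and $\loc{2}+1$ to the size) and then adjoin $e \defeq (k-1) - \loc{1} a - (b-1)\loc{2}$ additional single columns whose images are independent of the current $\mathbf{G}_{\mathcal{I}}$, pushing the rank to exactly $k-1$. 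The definition of $b$ forces $e \in \{0, \ldots, \loc{2} - 1\}$, and such independent columns exist as long as $\rank \mathbf{G}_{\mathcal{I}} < k$. Summing sizes,
\[
|\mathcal{I}| = a(\loc{1}+1) + (b-1)(\loc{2}+1) + e = k + a + b - 2,
\]
as required.

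The main obstacle is the bookkeeping at the boundaries. I need to (i) verify that at every step a target with an independent column exists in the appropriate region, which reduces to the observation that otherwise $\rank \mathbf{G}_{\mathcal{I}}$ would already equal $k$; (ii) treat the degenerate case $\loc{1} a \geq k$, in which phase two is vacuous ($b \leq 0$), phase one terminates as soon as the rank reaches $k-1$, and a separate count gives $|\mathcal{I}| \geq k + a - 2$; and (iii) carefully align the ceilings defining $a$ and $b$ with the iteration counts so that the final rank is exactly $k-1$ and the size count is not off by one. Linearity of $\CYC$ is essential, since it guarantees that every local dependency contributes precisely one unit of size-minus-rank per iteration.
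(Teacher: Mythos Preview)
Your approach is essentially the paper's: build a coordinate set of rank $k-1$ and size at least $k-1+\kappa_1+\kappa_2$ by exploiting $\kappa_1=\lceil n_1/(r_1+1)\rceil$ local dependencies inside $\locset{1}$ and then $\kappa_2$ further $r_2$-local dependencies, and finally invoke Lemma~\ref{lem_RankGeneratorMatrix}. The paper organizes the bookkeeping slightly differently---it fixes the redundant indices $j_\ell^{\langle i\rangle}$ up front, collects their repair sets into $\mathcal{I}$, enlarges $\mathcal{I}$ to rank $k-1$ avoiding the $j$'s, then adjoins the $j$'s back---but the count is identical.

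Two small repairs are needed before your write-up goes through. First, the reduction in your item~(i) is false as stated: $\rank\mathbf{G}_{\mathcal{I}}<k$ does \emph{not} guarantee an independent column inside a prescribed region $\locset{j}\setminus\mathcal{I}$; all columns there could already lie in the current span while independent columns survive elsewhere. Fortunately you never actually use independence of the target---the gap $|\mathcal{I}|-\rank\mathbf{G}_{\mathcal{I}}$ increases by at least one per locality iteration regardless---so simply drop that requirement, and your Phase-1 feasibility argument (each step consumes at most $r_1+1$ elements of $\locset{1}$, so $a$ steps are possible) is already enough. Second, insisting that Phase-2 targets lie in $\locset{2}$ can fail if repair sets from earlier steps happen to swallow all of $\locset{2}$; but since $r_1<r_2$ every coordinate is $r_2$-local, so allow Phase-2 targets to be any $i\in[n]\setminus\mathcal{I}$, which is nonempty as long as the rank is below $k$. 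With these two tweaks your final displayed equality becomes the inequality $|\mathcal{I}|\geq k+a+b-2$ (overlaps only help), obtained directly from the gap invariant, and the argument is complete and matches the paper's.
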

\begin{proof}
The proof follows the idea of the proof of the Singleton bound of an $\loc{}$-local code as in~\cite{kuijper_erasure_2014}. Let $\mathbf{G}$ be a $k \times n$ generator matrix of $\CYC$. 
We assume that $\loc{1} \lceil n_1/(\loc{1}+1) \rceil < k-1$. 
\begin{enumerate}[leftmargin=.5cm]
\item Choose $\kappa_1 \defeq \lceil n_1/(\loc{1}+1) \rceil$ columns of $\mathbf{G}$ indexed by $j_1^{\langle 1\rangle},j_2^{\langle 1\rangle},\dots,j_{\kappa_1}^{\langle 1\rangle}$, where $j_i^{\langle 1\rangle} \in \locset{1}$. Each column is a linear combination of at most $\loc{1}$ other columns. 
\item Let $\mathcal{I}^{\langle 1 \rangle}$ be the set of indexes of all repair columns of the columns indexed by  $j_1^{\langle 1\rangle},\dots,j_{\kappa_1}^{\langle 1\rangle}$, but without the indexes themselves. We have $| \mathcal{I}^{\langle 1 \rangle}| \leq \loc{1} \cdot \kappa_1 < k-1$.
\item Choose 
\begin{equation*}
\kappa_2 \defeq \left \lfloor \frac{k-1- \loc{1}\left \lceil \frac{n_1}{\loc{1}+1} \right \rceil}{\loc{2}} \right \rfloor 
\end{equation*}
columns of $\mathbf{G}$ indexed by $j_1^{\langle 2\rangle},j_2^{\langle 2\rangle},\dots,j_{\kappa_2}^{\langle 2\rangle}$, where $j_i^{\langle 2\rangle} \in \locset{2}$. Now, each of these columns is a linear combination of at most $\loc{2}$ other columns. Let $\mathcal{I}^{\langle 2 \rangle}$ 
be the set of indexes of all repair columns of the columns indexed by $j_1^{\langle 2\rangle},j_2^{\langle 2\rangle},\dots,j_{\kappa_2}^{\langle 2\rangle}$, but without the indexes themselves.
\item Let $ \mathcal{I} \defeq \mathcal{I}^{\langle 1 \rangle} \cup \mathcal{I}^{\langle 2 \rangle}$. Then $| \mathcal{I}| < k$ and we have $\rank(\mathbf{G}_{\mathcal{I}}) < k$. 
\item Enlarge $\mathcal{I}$ to a set $\mathcal{I}'$, such that $\rank(\mathbf{G}_{\mathcal{I}'}) = k-1$, but without using $\left\{j_1^{\langle i \rangle},j_2^{\langle i \rangle},\dots,j_{\kappa_i}^{\langle i \rangle}\right\}_{i=1,2}$. 
\item Then, define 
\begin{equation*}
\begin{split}
\mathcal{U} \defeq \left\{\mathcal{I}' \cup \left\{ j_1^{\langle 1 \rangle},j_2^{\langle 1 \rangle},\dots,j_{\kappa_1}^{\langle 1 \rangle}, j_1^{\langle 2 \rangle},j_2^{\langle 2 \rangle},\dots,j_{\kappa_2}^{\langle 2 \rangle}\right\} \right\}, 
\end{split}
\end{equation*}
where $| \mathcal{U}| \geq k-1 + \kappa_1 + \kappa_2$. 
We have $\rank(\mathbf{G}_{\mathcal{U}}) \leq k-1$. 
\end{enumerate}
Using Lemma~\ref{lem_RankGeneratorMatrix}, we know that $|\mathcal{U}|$ can be upper bounded by $k-1 + \kappa_1 + \kappa_2 \leq n-d$ and therefore we obtain the following bound on the minimum Hamming distance:
\begin{align*}
d & \leq n - k + 1 - \left \lceil \frac{n_1}{\loc{1}+1} \right \rceil - \left \lfloor \frac{k-1- \loc{1}\left \lceil \frac{n_1}{\loc{1}+1} \right \rceil }{\loc{2}} \right \rfloor \\
& = n - k + 2 - \left \lceil \frac{n_1}{\loc{1}+1} \right \rceil - \left \lceil \frac{k- \loc{1}\left \lceil \frac{n_1}{\loc{1}+1} \right \rceil }{\loc{2}} \right \rceil.
\end{align*}
In case $\loc{1} \lceil n_1/(\loc{1}+1) \rceil \geq k-1$, set $\kappa_1 = \lfloor (k-1)/\loc{1} \rfloor $ in Step 1) and $\kappa_2 = 0$ in Step 2). Then, we will obtain the Singleton bound of an $\loc{1}$-local code.
\end{proof}
For $\loc{1}=\loc{2}=\loc{}$, we obtain from~\eqref{eq_SingletonExp1} the Singleton bound of an $\loc{}$-local LRC as in Thm.~\ref{theo_GenSingleton}.

We extend Definition~\ref{def_CodeDiffLocalities} to ML-LRCs with $s \geq 2$ localities and generalize the Singleton-like bound on the minimum Hamming distance from Thm.~\ref{theo_SingletonDifferentLocalities}.
\begin{definition}[ML-LRC] \label{def_MultipleLRC}
Let $s$ integers $\loc{1}, \loc{2}, \dots, \loc{s}$ with $\loc{1} < \loc{2} < \dots < \loc{s}$ be given.
Denote by $\locset{i} \subset \inter{n}$ for all $i \in \inter{s}$ pairwise disjoint sets. Let $\cup_{i \in \inter{s}} \locset{i} = \inter{n}$.  
A linear $\LIN{n}{k}{d}{q}$ code is $((n_1, \loc{1}), (n_2, \loc{2}), \dots , (n_{s}, \loc{s}))$-local if all $n_i$ code symbols within a set $\locset{i}$ with $|\locset{i}| = n_i$ are a linear combination of at most $\loc{i}$ other code symbols within $\locset{i}$ for all $i \in \inter{s}$. 
\end{definition}
Now, we give a Singleton-like bound for an $((n_1, \loc{1}), (n_2, \loc{2}),\dots , (n_{s}, \loc{s}))$-local code as in Definition~\ref{def_MultipleLRC}.
\begin{theorem}[Singleton Bound for ML-LRCs] \label{theo_SingletonDifferentLocalitiesSeveral}
Let $\CYC$ be an $\LIN{n}{k}{d}{q}$ $((n_1, \loc{1}), (n_2, \loc{2}), \dots , (n_{s}, \loc{s}))$-local code as in Definition~\ref{def_MultipleLRC}. Then, the minimum Hamming distance of $\CYC$ is upper bounded by
\begin{align} \label{eq_SingletonBoundMultipleLocalities}
d & \leq n \minusb k + 2 \minusb \sum_{i \in \inter{s \minus 1}} \left \lceil \frac{n_i}{\loc{i}+1} \right \rceil \minusb \left \lceil \frac{k \minusb \sum \limits_{i \in \inter{s \minus 1}} \loc{i}\left \lceil \frac{n_i}{\loc{i}+1} \right \rceil }{\loc{s}} \right \rceil.
\end{align}
\end{theorem}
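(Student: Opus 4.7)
The plan is to extend the two-locality argument of Thm.~\ref{theo_SingletonDifferentLocalities} directly to $s$ locality classes, treating the first $s-1$ classes uniformly and reserving the last class $\locset{s}$ (with the largest locality $\loc{s}$) for absorbing the residual dimension, exactly mirroring the roles played by $\locset{1}$ and $\locset{2}$ in the $s=2$ case. Let $\mathbf{G}$ be a $k \times n$ generator matrix of $\CYC$ and assume first the generic condition
\begin{equation*}
\sum_{i \in \inter{s \minus 1}} \loc{i} \left\lceil \frac{n_i}{\loc{i}+1} \right\rceil < k-1 .
\end{equation*}

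For each $i \in \inter{s-1}$, I would pick $\kappa_i \defeq \lceil n_i/(\loc{i}+1) \rceil$ column indices $j_1^{\langle i \rangle}, \dots, j_{\kappa_i}^{\langle i \rangle} \in \locset{i}$ and collect into $\mathcal{I}^{\langle i \rangle}$ the union of the repair columns of these $\kappa_i$ symbols (excluding the $j_\ell^{\langle i \rangle}$'s themselves). Since each selected symbol is a linear combination of at most $\loc{i}$ other symbols, $|\mathcal{I}^{\langle i \rangle}| \leq \loc{i} \kappa_i$. Next, setting
\begin{equation*}
\kappa_s \defeq \left\lfloor \frac{k - 1 - \sum_{i \in \inter{s \minus 1}} \loc{i} \lceil n_i/(\loc{i}+1) \rceil}{\loc{s}} \right\rfloor,
\end{equation*}
I would choose $\kappa_s$ further indices $j_1^{\langle s \rangle}, \dots, j_{\kappa_s}^{\langle s \rangle} \in \locset{s}$ and collect their repair columns into $\mathcal{I}^{\langle s \rangle}$, with $|\mathcal{I}^{\langle s \rangle}| \leq \loc{s} \kappa_s$.

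Now let $\mathcal{I} \defeq \bigcup_{i \in \inter{s}} \mathcal{I}^{\langle i \rangle}$. Then by the choice of $\kappa_s$, $|\mathcal{I}| \leq \sum_{i \in \inter{s}} \loc{i} \kappa_i \leq k-1$, so $\rank(\mathbf{G}_{\mathcal{I}}) < k$. Enlarge $\mathcal{I}$ to $\mathcal{I}'$ with $\rank(\mathbf{G}_{\mathcal{I}'}) = k-1$, avoiding the indices $\{j_\ell^{\langle i \rangle}\}$. Setting
\begin{equation*}
\mathcal{U} \defeq \mathcal{I}' \cup \bigcup_{i \in \inter{s}} \{j_1^{\langle i \rangle}, \dots, j_{\kappa_i}^{\langle i \rangle}\},
\end{equation*}
the $j_\ell^{\langle i \rangle}$'s lie in the column span of $\mathcal{I}^{\langle i \rangle} \subseteq \mathcal{I}'$, so $\rank(\mathbf{G}_{\mathcal{U}}) = k-1$ while $|\mathcal{U}| \geq k-1 + \sum_{i \in \inter{s}} \kappa_i$. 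Lemma~\ref{lem_RankGeneratorMatrix} yields $|\mathcal{U}| \leq n-d$, and converting the floor in $\kappa_s$ to a ceiling via the identity $\lfloor (a-1)/b \rfloor = \lceil a/b \rceil - 1$ recovers exactly~\eqref{eq_SingletonBoundMultipleLocalities}.

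The main obstacle is the degenerate situation where the generic assumption fails, i.e., there is a smallest $m \in \inter{s}$ with $\sum_{i \in \inter{m}} \loc{i} \lceil n_i/(\loc{i}+1) \rceil \geq k-1$. There I would truncate the procedure at index $m$, replacing $\kappa_m$ by $\lfloor (k-1 - \sum_{i < m} \loc{i}\kappa_i)/\loc{m} \rfloor$ and setting $\kappa_{m+1} = \dots = \kappa_s = 0$, which collapses the argument to (a finer version of) the generalized Singleton bound of Thm.~\ref{theo_GenSingleton} for an $\loc{m}$-local code; a short monotonicity check using $\loc{m} < \loc{m+1} < \dots < \loc{s}$ shows that the resulting inequality is at least as tight as~\eqref{eq_SingletonBoundMultipleLocalities}, so the bound still holds. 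Beyond this case analysis the steps are mechanical, and the only per-step verification is that the repair-set and index bookkeeping does not double count (handled, as in the $s=2$ proof, by excluding the $j_\ell^{\langle i \rangle}$'s from $\mathcal{I}^{\langle i \rangle}$ and from the enlargement step producing $\mathcal{I}'$).
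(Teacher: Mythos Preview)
Your proposal is correct and essentially identical to the paper's own proof: the same choices $\kappa_i=\lceil n_i/(\loc{i}+1)\rceil$ for $i\in\inter{s-1}$ and $\kappa_s=\lfloor(k-1-\sum_{i<s}\loc{i}\kappa_i)/\loc{s}\rfloor$, the same construction of $\mathcal{I}$, $\mathcal{I}'$, $\mathcal{U}$, the same application of Lemma~\ref{lem_RankGeneratorMatrix}, and the same truncation argument for the degenerate case. Your treatment of the degenerate situation (with the explicit monotonicity remark in $\loc{m}<\dots<\loc{s}$) is in fact slightly more explicit than the paper's, but the argument is the same.
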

\begin{proof}
See appendix.
\end{proof}
We call an $((n_1, \loc{1}), (n_2, \loc{2}),\dots , (n_{s}, \loc{s}))$-local ML-LRC with minimum Hamming distance that fulfills~\eqref{eq_SingletonBoundMultipleLocalities} with equality \textit{Singleton-optimal}.

\subsection{Shortening and Constructions}
In this subsection, we first show that an $((n_1, \loc{1}), (n_2, \loc{2}))$-local Singleton-optimal ML-LRC can be obtained through shortening an $r$-local Singleton-optimal LRC. Then, we analyze the shortening of an $((n_1, \loc{1}), (n_2, \loc{2}), \dots , (n_{s}, \loc{s}))$-local Singleton-optimal ML-LRC. We give an explicit construction of an $((n_1, \loc{1}), (n_2, \loc{2}))$-local Singleton-optimal code for the ease of notation. The construction can be easily extended to $s \geq 2$ localities. Furthermore, we adapt the construction of information-symbol-local Pyramid codes to the case of multiple localities.

We consider the case of shortening the $i$th information symbol (see, e.g., \cite[Ch. 18 \S 9]{macwilliams_theory_1988}).
\begin{definition}[Shortening] \label{def_Shortening}
Let $\CYC$ be an $\LIN{n}{k}{d}{q}$ code with generator matrix in systematic form, i.e., $\mathbf{G} = (\mathbf{I}_k \ \mathbf{A})$. A parity-check matrix is then $\mathbf{H} = (-\mathbf{A}^T \ \mathbf{I}_{n-k} )$. The shortened code is defined as 
\begin{equation*}
\begin{split}
\CYC^{\langle i \rangle} \defeq & \lbrace (c_1 \ c_2 \ \dots \ c_{i-1} \ c_{i+1} \ \dots \ \ c_{n}) : \\
 & \qquad (c_1 \ c_2 \ \dots \ c_{i-1} \ 0 \ c_{i+1} \ \dots \ \ c_{n}) \in \CYC \rbrace.
\end{split}
\end{equation*}
For $i \in \inter{k}$, the generator matrix of $\CYC^{\langle i \rangle}$ is obtained through deleting the $i$th column of $\mathbf{I}_k$ (and the corresponding $i$th row) of $\mathbf{G}$. A parity-check of $\CYC^{\langle i \rangle}$ is obtained by deleting the $i$th column in $\mathbf{A}^T$ of $\mathbf{H}$. The shortened code is an $\LIN{n-1}{k-1}{\geq d}{q}$ code. 
\end{definition}
Throughout this paper we refer shortening to the case where $i \in \inter{k}$.

Clearly, if $\CYC$ is an MDS code, then the shortened code $\CYC^{\langle i \rangle}$ is also an MDS code. The following lemma shows that this holds similarly for an $\loc{}$-local Singleton-optimal LRC.
\begin{lemma}[Shortening an $\loc{}$-local LRC] \label{lem_shorteningLRC}
Let an $\LIN{n}{k}{d}{q}$ $\loc{}$-local Singleton-optimal LRC be given. Then, the shortened $\LIN{n-1}{k-1}{d}{q}$ code is an $((\loc{}, \loc{}-1), (n-1-\loc, \loc{}))$-local code and is Singleton-optimal.
\end{lemma}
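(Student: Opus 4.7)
The strategy is (i) to extract the claimed two-locality structure of the shortened code directly from the repair relation that covers the shortened coordinate, and (ii) to observe that the ML-LRC Singleton bound of Theorem~\ref{theo_SingletonDifferentLocalities}, evaluated at these parameters, coincides with $d$; combined with the fact that shortening cannot decrease the minimum distance, this forces Singleton-optimality.

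Let $i \in \inter{k}$ be the shortened coordinate. Because $\CYC$ is $r$-local, there exist a set $R \subseteq \inter{n} \setminus \{i\}$ with $|R| \leq r$ and nonzero scalars $\{a_j\}_{j \in R} \subset \mathbb{F}_q$ such that $c_i = \sum_{j \in R} a_j c_j$; take $R$ to be a minimal such support, so that every $a_j$ is nonzero. I will argue that one may assume $|R| = r$: the computation below only balances when $|R| = r$, so we must choose $i$ among the coordinates of genuine locality $r$. For a Singleton-optimal $r$-local code such coordinates exist, since a code whose every coordinate admitted a repair set of size strictly less than $r$ would itself be $(r-1)$-local and hence subject to the tighter Singleton bound of Theorem~\ref{theo_GenSingleton} applied with locality $r-1$, generically contradicting the assumed Singleton-optimality. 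Define $T_1 := R$, so $|T_1| = r$, and $T_2 := \inter{n} \setminus (\{i\} \cup R)$, so $|T_2| = n - 1 - r$.

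After shortening the relation $c_i = \sum_{j \in R} a_j c_j$ becomes the single parity check $\sum_{j \in R} a_j c_j = 0$ supported on $T_1$. Since every $a_j \neq 0$, each symbol in $T_1$ is expressible as a linear combination of the remaining $r - 1$ symbols of $T_1$, giving locality at most $r - 1$. Every symbol in $T_2$ retains its original repair relation (with the coefficient of $c_i$, if present, simply dropped), and hence has locality at most $r$. By Definition~\ref{def_CodeDiffLocalities}, the shortened code is $((r, r-1), (n - 1 - r, r))$-local. Plugging $n_1 = r$, $\loc{1} = r - 1$, $n_2 = n - 1 - r$, $\loc{2} = r$ and the shortened dimensions $(n-1, k-1)$ into Theorem~\ref{theo_SingletonDifferentLocalities} yields $\lceil r/r \rceil = 1$ and $\lceil ((k-1) - (r-1))/r \rceil = \lceil (k-r)/r \rceil = \lceil k/r \rceil - 1$, so
\begin{equation*}
d' \;\leq\; (n-1) - (k-1) + 2 - 1 - \bigl(\lceil k/r \rceil - 1\bigr) \;=\; n - k + 2 - \lceil k/r \rceil \;=\; d.
\end{equation*}
Shortening preserves the minimum distance up to inequality $d' \geq d$, so $d' = d$ and the shortened code meets the ML-LRC Singleton bound with equality.

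The main obstacle is the structural step $|R| = r$. If the shortened coordinate had a strictly smaller minimal repair set of size $|R| < r$, the same substitution into Theorem~\ref{theo_SingletonDifferentLocalities} would produce the weaker upper bound $n - k + 1 - \lceil (k - |R|)/r \rceil$, which in general falls strictly below $d$ and contradicts $d' \geq d$; one therefore has to exploit Singleton-optimality of $\CYC$ to select (or relabel) the shortening coordinate among those with genuine locality $r$. All remaining steps are direct substitutions or standard shortening facts.
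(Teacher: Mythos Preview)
Your proof takes the same approach as the paper's: observe that shortening reduces one repair group to $r$ symbols of locality $r-1$, substitute $n_1=r$, $\loc{1}=r-1$, $\loc{2}=r$ into Theorem~\ref{theo_SingletonDifferentLocalities}, and verify the bound collapses to $n-k+2-\lceil k/r\rceil=d$. You are more explicit than the paper about the step $d'\geq d$ and about the assumption $|R|=r$; the paper simply asserts that the affected repair set has cardinality $r$ without further discussion.
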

\begin{proof}
Clearly, the shortening by one symbol affects at least one repair-set and its cardinality is decreased to $\loc{}$. The locality of the contained code symbols is then $\loc{}-1$. We obtain from~\eqref{eq_SingletonExp1} with $n_1=\loc{}$, $\loc{1} = \loc{}-1$ and $\loc{2}=\loc{}$ for the shortened code of length $n-1$ and dimension $k-1$ the following bound on the minimum Hamming distance:
\begin{align*}
d & \leq n \minusb 1 \minusb (k \minusb 1) + 2 \minusb \left \lceil \frac{n_1}{\loc{1}+1} \right \rceil \minusb \left \lceil \frac{ (k \minusb 1) \minusb \loc{1}\left \lceil \frac{n_1}{\loc{1}+1} \right \rceil }{\loc{2}} \right \rceil \\
& = n - k + 2 - \left \lceil \frac{\loc{}}{\loc{}} \right \rceil - \left \lceil \frac{k-1- (\loc{}-1)\left \lceil \frac{\loc{}}{\loc{}} \right \rceil }{\loc{}} \right \rceil \\
& = n - k + 2 - \left \lceil \frac{k}{\loc{}} \right \rceil, 
\end{align*}
which coincides with the minimum Hamming distance of the given $\LIN{n}{k}{d}{q}$ $\loc{}$-local Singleton-optimal code.
\end{proof}
\begin{example}[Shortened Singleton-optimal LRC]
We shorten the $\LIN{12}{6}{6}{13}$ $3$-local Singleton-optimal code of Example 2 in~\cite{tamo_family_2014} by one position. We obtain an $\LIN{11}{5}{6}{13}$ $((3, 2), (8, 3))$-local code and Thm.~\ref{theo_SingletonDifferentLocalities} gives $d \leq 11 - 5 +2 - \lceil 3 /(2+1) \rceil - \lceil (5-2) /3 \rceil = 6$.  
\end{example}
\begin{lemma}[Shortening an ML-LRC] \label{lem_ShorteningSeveral}
Let $\CYC$ be an $\LIN{n}{k}{d}{q}$ $((n_1, \loc{1}), (n_2, \loc{2}), \dots , (n_{s}, \loc{s}))$-local Singleton-optimal code. Let $\locset{1}, \locset{2}, \dots, \locset{s}$ denote the locality sets and let $\alpha \in \inter{s}$. Shortening $\CYC$ by a coordinate that is contained in $\locset{\alpha}$ gives if $\loc{\alpha-1} =\loc{\alpha}-1$ an $\LIN{n'=n-1}{k'=k-1}{d}{q}$ $((n'_1, \loc{1}'), (n'_2, \loc{2}'),\dots , (n_{s}', \loc{s}'))$-local code $\CYC'$ with
\begin{equation} \label{eq_DiffLocShortCaseOne}
\begin{split}
(n'_i, \loc{i}') & = (n_i, \loc{i}), \qquad \forall i \in \inter{s} \setminus \{\alpha-1, \alpha \}, \\
(n'_{\alpha-1}, \loc{\alpha-1}') & = (n_{\alpha-1}+\loc{\alpha}, \loc{\alpha-1}), \\
(n'_{\alpha}, \loc{\alpha}') & = (n_{\alpha} - \loc{\alpha}-1, \loc{\alpha}), 
\end{split}
\end{equation}
else (i.e., $\loc{\alpha-1} \neq \loc{\alpha}-1$) shortening gives an $\LIN{n'=n-1}{k'=k-1}{d}{q}$ $((n'_1, \loc{1}'), (n'_2, \loc{2}'),\dots, (n'_{\iota}, \loc{\iota}'), \dots , (n_{s}', \loc{s}'))$-local code $\CYC'$ with
\begin{equation} \label{eq_DiffLocShortCaseTwo}
\begin{split}
(n'_i, \loc{i}') & = (n_i, \loc{i}), \qquad \forall i \in \inter{s} \setminus \{\alpha\}, \\
(n'_{\iota}, \loc{\iota}') & = (\loc{\alpha}, \loc{\alpha}-1),\\
(n'_{\alpha}, \loc{\alpha}') & = (n_{\alpha} - \loc{\alpha}-1, \loc{\alpha}).
\end{split}
\end{equation}
Then, the shortened $\LIN{n'=n-1}{k'=k-1}{d}{q}$ $((n'_1, \loc{1}'),(n'_2, \loc{2}'), \dots , (n_{s}', \loc{s}'))$-local respectively the $((n'_1, \loc{1}'),(n'_2, \loc{2}'), $ ... $, (n'_{\iota}, \loc{\iota}'), $ ...  $ , (n_{s}', \loc{s}'))$-local ML-LRC is Singleton-optimal.
\end{lemma}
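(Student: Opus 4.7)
I would split the statement into two decoupled parts: (i) verifying that after shortening at $i \in \locset{\alpha}$ the remaining code has exactly the localities described in~\eqref{eq_DiffLocShortCaseOne} or~\eqref{eq_DiffLocShortCaseTwo}, and (ii) verifying that those parameters still saturate the bound~\eqref{eq_SingletonBoundMultipleLocalities}.

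For part~(i), the key observation is that the parity relation responsible for the $\loc{\alpha}$-locality of the shortened coordinate involves a repair set $R \subseteq \locset{\alpha}$ of size $\loc{\alpha}+1$. After shortening the $i$-th coordinate (which sets $c_i = 0$ and deletes that column, cf.\ Definition~\ref{def_Shortening}), this parity relation becomes a single parity relation on $R \setminus \{i\}$, so each of the $\loc{\alpha}$ remaining members of $R$ is now recoverable from the other $\loc{\alpha}-1$ of them, i.e.\ has locality $\loc{\alpha}-1$. All positions outside $R$ are untouched because, by Definition~\ref{def_MultipleLRC}, repair sets lie inside a single locality class, so no other repair set contains $i$. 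Depending on whether $\loc{\alpha-1} = \loc{\alpha}-1$ or not, these $\loc{\alpha}$ newly-improved positions either merge with $\locset{\alpha-1}$, giving~\eqref{eq_DiffLocShortCaseOne}, or constitute a brand-new locality class $\iota$, giving~\eqref{eq_DiffLocShortCaseTwo}.

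For part~(ii), I would combine two facts. First, shortening does not decrease the minimum distance (Definition~\ref{def_Shortening}), so $d(\CYC') \geq d$. Second, evaluating bound~\eqref{eq_SingletonBoundMultipleLocalities} on the parameters in~\eqref{eq_DiffLocShortCaseOne} or~\eqref{eq_DiffLocShortCaseTwo} yields exactly $d$, forcing $d(\CYC') = d$ and hence Singleton-optimality. Concretely, in Case~1 with $\alpha < s$ the replacement $n_{\alpha-1} \mapsto n_{\alpha-1} + \loc{\alpha}$, using $\loc{\alpha-1}+1 = \loc{\alpha}$, gives $\lceil n'_{\alpha-1}/(\loc{\alpha-1}+1)\rceil = \lceil n_{\alpha-1}/\loc{\alpha}\rceil + 1$, while $n_{\alpha} \mapsto n_{\alpha} - \loc{\alpha} - 1$ gives $\lceil n'_{\alpha}/(\loc{\alpha}+1)\rceil = \lceil n_{\alpha}/(\loc{\alpha}+1)\rceil - 1$. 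These two changes cancel in the outer sum of~\eqref{eq_SingletonBoundMultipleLocalities} and contribute a net $-1$ to the $r_i$-weighted sum inside the remaining ceiling, which exactly compensates the $k \mapsto k-1$ shift; together with the $n \mapsto n-1$ out front, the bound stays equal to $d$. Case~2 proceeds analogously: the new class $\iota$ contributes $\lceil \loc{\alpha}/\loc{\alpha}\rceil = 1$ to the outer sum (cancelling the $-1$ from class $\alpha$) and $\loc{\iota} = \loc{\alpha}-1$ to the weighted inner sum, producing the same net $-1$.

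\textbf{Main obstacle.} The routine computation above assumes the affected classes sit inside the outer sum $\sum_{i \in \inter{s-1}}$; the real bookkeeping annoyances arise when $\alpha = s$ (so $n_{\alpha}$ appears only inside the outer ceiling involving $\loc{s}$, not in the outer sum) and when $\alpha = 1$ (so Case~1 is vacuous and one must also exclude the degenerate possibility $\loc{1} = 1$ to make $\loc{\iota} = \loc{1}-1$ meaningful). In the $\alpha = s$ subcase the cancellation is between a $+1$ contributed to the outer sum by class $\alpha-1$ (or by the new class $\iota$) and a $-1$ produced by $\lceil\,\cdot\,/\loc{s}\rceil$ after subtracting $\loc{s}$ from its numerator; I would treat these edge subcases separately but verify in each that~\eqref{eq_SingletonBoundMultipleLocalities} again collapses to $d$.
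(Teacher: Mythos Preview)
Your proposal is correct and follows essentially the same route as the paper's proof: both arguments boil down to checking that, under the parameter substitutions \eqref{eq_DiffLocShortCaseOne} respectively \eqref{eq_DiffLocShortCaseTwo}, the sum $\sum_{i}\lceil n_i/(\loc{i}+1)\rceil$ is unchanged while the weighted sum $\sum_{i}\loc{i}\lceil n_i/(\loc{i}+1)\rceil$ drops by exactly $1$, so that together with $n\mapsto n-1$ and $k\mapsto k-1$ the right-hand side of \eqref{eq_SingletonBoundMultipleLocalities} is preserved. Your explicit treatment of the boundary cases $\alpha=s$ and $\alpha=1$ is a worthwhile addition, since the paper's appendix computation tacitly places both $\alpha-1$ and $\alpha$ inside the index range $\inter{s-1}$ of the outer sum.
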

\begin{proof}
See appendix.
\end{proof}
Algorithm~\ref{algo_ConstructTwoDiffLocalitiesLRC} provides a parity-check matrix of a Singleton-optimal $\LIN{n_1 + n_2}{k}{d}{q}$ $ ((n_1, \loc{1}), (n_2, \loc{2}))$-local code. 
\begin{center}
\begin{algorithm}[h]
\label{algo_ConstructTwoDiffLocalitiesLRC}
\SetAlgoVlined
\DontPrintSemicolon
\LinesNumbered
\SetKwInput{KwSub}{Subroutines}
\SetKwInput{KwIn}{Input}
\SetKwInput{KwOut}{Output}
\newcommand\mycommfont[1]{\footnotesize\ttfamily{#1}}
\SetCommentSty{mycommfont}
\BlankLine
\KwIn{\\Parity-check matrix $\mathbf{H}$ of an $\LIN{n_2 + (\loc{2}+1)\frac{n_1}{\loc{1}+1}}{k + (\loc{2}-\loc{1})\frac{n_1}{\loc{1}+1} }{d}{q}$ $\loc{2}$-local Singleton-optimal code with $\rho = \frac{n_2}{\loc{2}+1} + \frac{n_1}{\loc{1}+1}$ repair sets $\mathcal{R}_1, \mathcal{R}_2, \dots, \mathcal{R}_{\rho}$.
}
\BlankLine 
\BlankLine 
\For{$i \in \inter{\frac{n_1}{\loc{1}+1}}$}
{
Delete $\loc{2}-\loc{1}$ columns of $\mathbf{H}$ that are contained in $\mathcal{R}_i$.
}
\BlankLine
\BlankLine
\KwOut{Parity-check matrix $\mathbf{H}$ of an $\LIN{n_1 + n_2}{k}{d}{q}$ $ ((n_1, \loc{1}), (n_2, \loc{2}))$-local code.}
\caption{Singleton-optimal ML-LRC with $r_2 > r_1$.}
\end{algorithm}
\end{center}
We assume that $(\loc{1}+1) \mid n_1$, $(\loc{2}+1) \mid n_2$, and that an $\loc{2}$-local Singleton-optimal code is given.
\begin{theorem}[Singleton-optimal ML-LRC] \label{theo_SingletonOptimalMLLRC}
Algorithm~\ref{algo_ConstructTwoDiffLocalitiesLRC} returns a parity-check matrix of a Singleton-optimal $\LIN{n_1 + n_2}{k}{d}{q}$ $ ((n_1, \loc{1}), (n_2, \loc{2}))$-local code that covers the possible rate-regime.
\end{theorem}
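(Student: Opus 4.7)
The plan is to view Algorithm~\ref{algo_ConstructTwoDiffLocalitiesLRC} as an iterated shortening of the seed $\loc{2}$-local Singleton-optimal code $\CYC_0$ of length $N = n_2 + (\loc{2}+1)\, n_1/(\loc{1}+1)$ and dimension $K = k + (\loc{2}-\loc{1})\, n_1/(\loc{1}+1)$, and then to read off the output parameters, locality pattern, and minimum distance by applying Lemma~\ref{lem_shorteningLRC} and Lemma~\ref{lem_ShorteningSeveral} one column at a time. The structural fact used throughout is that $\CYC_0$ admits a parity-check matrix in which every repair set $\mathcal{R}_j$ is the support of exactly one local parity-check row, and that the $\rho = n_2/(\loc{2}+1) + n_1/(\loc{1}+1)$ repair sets are pairwise disjoint; this is the model stated in the input specification of the algorithm and is the shape of all of the standard Singleton-optimal $\loc{2}$-local constructions.

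The length and dimension follow immediately. The algorithm deletes $(\loc{2}-\loc{1})\, n_1/(\loc{1}+1)$ columns of $\mathbf{H}$, so the output length is $N - (\loc{2}-\loc{1})\, n_1/(\loc{1}+1) = n_1 + n_2$. Since each deleted column has a nonzero entry in the single local parity-check row of its repair set, the rank of $\mathbf{H}$ is preserved by each deletion and the code dimension drops by exactly one per deletion, leaving dimension $K - (\loc{2}-\loc{1})\, n_1/(\loc{1}+1) = k$. For the locality pattern, after $\loc{2}-\loc{1}$ deletions inside $\mathcal{R}_i$ the associated local parity-check is still a row of $\mathbf{H}$ but with support on the $\loc{1}+1$ surviving coordinates of $\mathcal{R}_i$, so each such coordinate is a linear combination of the other $\loc{1}$ symbols in its trimmed repair set. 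The $n_2/(\loc{2}+1)$ untouched repair sets still witness locality $\loc{2}$, so the output is $((n_1,\loc{1}),(n_2,\loc{2}))$-local.

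For the distance, shortening never decreases $d$, so the output has distance at least that of $\CYC_0$, namely $d = N - K + 2 - \lceil K/\loc{2} \rceil$. Writing $a = n_1/(\loc{1}+1)$, a direct computation gives $N - K = n_1 + n_2 - k$ and, since $a$ is an integer, $\lceil K/\loc{2} \rceil = a + \lceil (k - \loc{1} a)/\loc{2} \rceil$, which matches exactly the upper bound of Thm.~\ref{theo_SingletonDifferentLocalities} on $d$ for an $((n_1,\loc{1}),(n_2,\loc{2}))$-local code of dimension $k$. Equality is therefore forced, and the output is Singleton-optimal.

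I expect the main obstacle to lie in the bookkeeping of the repair sets: one must justify that the seed indeed admits a parity-check matrix with disjoint local parity-check supports, so that the $\loc{2}-\loc{1}$ column deletions inside each $\mathcal{R}_i$ neither disturb the locality of the other repair sets nor collapse the rank of $\mathbf{H}$, and one must check that a shortening at an arbitrary (possibly non-information) coordinate truly reduces the dimension by one under this structure. The rate-regime coverage then follows because as $(n_1, n_2, k)$ ranges over the admissible integers with $(\loc{1}+1) \mid n_1$ and $(\loc{2}+1) \mid n_2$, the seed parameters $(N, K)$ sweep every rate allowed by Thm.~\ref{theo_SingletonDifferentLocalities}.
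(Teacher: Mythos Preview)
Your approach is essentially the paper's: both arguments compute the seed distance $d = N - K + 2 - \lceil K/\loc{2} \rceil$ and verify algebraically (via the substitution $a = n_1/(\loc{1}+1)$) that it equals the bound of Thm.~\ref{theo_SingletonDifferentLocalities}, forcing Singleton-optimality. Your added structural remarks on the locality pattern and the per-column dimension drop are things the paper tacitly assumes rather than proves, so in that respect your write-up is more complete; note however that your rank-preservation sentence is not quite the right justification---the clean argument is that each intermediate shortened code still has distance $\geq d \geq 2$, hence no identically-zero coordinate, so each deletion drops the dimension by exactly one.

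The one place where your proposal is thinner than the paper is the rate-regime claim. The paper does not invoke Thm.~\ref{theo_SingletonDifferentLocalities} for this; it simply substitutes $N = n_2 + (\loc{2}+1)a$ and $K = k + (\loc{2}-\loc{1})a$ into the seed's rate constraint $K \leq \tfrac{\loc{2}}{\loc{2}+1} N$ and simplifies to the equivalent inequality
\[
k \;\leq\; \frac{\loc{1}}{\loc{1}+1}\,n_1 \;+\; \frac{\loc{2}}{\loc{2}+1}\,n_2,
\]
which is exactly the natural rate restriction for an $((n_1,\loc{1}),(n_2,\loc{2}))$-local code. Replacing your last sentence with this two-line computation would make the rate-regime part fully rigorous.
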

\begin{proof}
The minimum Hamming distance of the given $\LIN{n'}{k'}{d'}{q}$ $\loc{2}$-local Singleton-optimal LRC, where
\begin{align}
n' & = n_2 + (\loc{2}+1)\frac{n_1}{\loc{1}+1},  \label{eq_SingletonMLLRClength}\\
k' & = k + (\loc{2}-\loc{1})\frac{n_1}{\loc{1}+1}, \label{eq_SingletonMLLRCdimension}
\end{align}
equals
\begin{align} \label{eq_DistanceMLLRCConstruct}
d' & = n' - k' + 2 - \left \lceil \frac{k'}{\loc{2}} \right \rceil.
\end{align}
Inserting the expression of the length and the dimension as in~\eqref{eq_SingletonMLLRClength} respectively~\eqref{eq_SingletonMLLRCdimension} into~\eqref{eq_DistanceMLLRCConstruct} leads to:
\begin{align}
d' & \leq n_2 + (\loc{2}+1)\frac{n_1}{\loc{1}+1} - k - (\loc{2}-\loc{1})\frac{n_1}{\loc{1}+1} + 2 \nonumber \\
& \qquad - \left \lceil \frac{k + (\loc{2}-\loc{1})\frac{n_1}{\loc{1}+1}}{\loc{2}} \right \rceil \nonumber \\
& = n_2 + \frac{n_1}{\loc{1}+1} - k + \loc{1}\frac{n_1}{\loc{1}+1} + 2 - \left \lceil \frac{k -\loc{1}\frac{n_1}{\loc{1}+1}}{\loc{2}} \right \rceil \nonumber \\
& \qquad - \frac{n_1}{\loc{1}+1} \nonumber \\
& = n_2 - k + \loc{1}\frac{n_1}{\loc{1}+1} + 2 - \left \lceil \frac{k -\loc{1} \frac{n_1}{\loc{1}+1}}{\loc{2}} \right \rceil \nonumber \\
& = n_2 - k + (\loc{1}+1)\frac{n_1}{\loc{1}+1} + 2 - \frac{n_1}{\loc{1}+1} - \left \lceil \frac{k -\loc{1}\frac{n_1}{\loc{1}+1}}{\loc{2}} \right \rceil \nonumber \\
& =  n_1 + n_2 - k + 2 - \frac{n_1}{\loc{1}+1} - \left \lceil \frac{k -\loc{1}\frac{n_1}{\loc{1}+1}}{\loc{2}} \right \rceil, 
\end{align}
which is the minimum Hamming distance of an $\LIN{n_1 + n_2}{k}{d}{q}$ $ ((n_1, \loc{1}), (n_2, \loc{2}))$-local code that is optimal with respect to the bound given in Thm.~\ref{theo_SingletonDifferentLocalities}.

To prove the achievable rate-regime, we recall that the original $\loc{2}$-local LRC has to satisfy:
\begin{equation} \label{eq_RateRestriction}
k' \leq \frac{\loc{2}}{\loc{2}+1} n',
\end{equation}
and inserting the expressions of~\eqref{eq_SingletonMLLRClength} and~\eqref{eq_SingletonMLLRCdimension} in~\eqref{eq_RateRestriction} gives 
\begin{align*}
k + (\loc{2}-\loc{1})\frac{n_1}{\loc{1}+1} & \leq  \frac{\loc{2}}{\loc{2}+1} n_2 + (\loc{2}+1)\frac{n_1}{\loc{1}+1} \\
\Leftrightarrow k & \leq \frac{\loc{1}}{\loc{1}+1}n_1 + \frac{\loc{2}}{\loc{2}+1}n_2,
 \end{align*}
which is clearly the rate-restriction for an $ ((n_1, \loc{1}), (n_2, \loc{2}))$-local code.
\end{proof}
Pyramid codes~\cite{huang_pyramid_2007} are a class of Singleton-optimal \textit{information-symbol}-local codes. The construction is based on an MDS code and the maximal code length of a Pyramid code scales linearly with the field size.
Let $\mathbf{a} \cdot \mathbf{b} =  \sum_{i \in \inter{n}} a_i b_i$ denote the dot-product for two given vectors $\mathbf{a}, \mathbf{b} \in \Fq^{n}$. 
\begin{construction}[ML-Pyramid Code] \label{constr_PyramidCode}
Let $s$ integers $\loc{1}, \loc{2}, \dots, \loc{s}$ with $\loc{1} < \loc{2} < \dots < \loc{s}$ be given.
Denote by $\locset{i} \subseteq \inter{k}$ with $k_i \defeq |\locset{i}| $ for all $i \in \inter{s}$ pairwise disjoint sets and let $\cup_{i \in \inter{s}} \locset{i} = \inter{k}$.
We partition $\locset{i}$ into $\kappa_i = \lceil k_i/ \loc{i} \rceil$ disjoint subsets $\mathcal{I}_{i,1},\mathcal{I}_{i,2},\dots, \mathcal{I}_{i,\kappa_i}$, such that $\locset{i} = \cup_{j \in \inter{\kappa_i}} \mathcal{I}_{i,j}$ and $|\mathcal{I}_{i,j}| \leq \loc{i}$ for all $i \in \inter{s}$. 
Let a $\LIN{k+d-1}{k}{d}{q}$ MDS code $\CYC$ and an information vector $\mathbf{x} \in \F{q}^{k}$ be given.
Any codeword $\mathbf{c} \in \CYC$ is of the following form:
\begin{equation*} 
\mathbf{c} = \left(\mathbf{x}, \mathbf{p}^{\langle 1 \rangle} \cdot \mathbf{x}, \mathbf{p}^{\langle 2 \rangle} \cdot \mathbf{x}, \dots, \mathbf{p}^{\langle d-1 \rangle} \cdot \mathbf{x}\right).
\end{equation*}
The modified codeword of an information-symbol-local ML-LRC is then:
\begin{align*}
& \Big( \mathbf{x}, \mathbf{p}^{\langle 1 \rangle}_{\mathcal{I}_{1,1}} \cdot \mathbf{x}_{\mathcal{I}_{1,1}}, \dots, \mathbf{p}^{\langle 1 \rangle}_{\mathcal{I}_{1,\kappa_1}} \cdot  \mathbf{x}_{\mathcal{I}_{1,\kappa_1}}, \mathbf{p}^{\langle 1 \rangle}_{\mathcal{I}_{2,1}} \cdot \mathbf{x}_{\mathcal{I}_{2,1}}, \dots, \\
& \qquad \mathbf{p}^{\langle 1 \rangle}_{\mathcal{I}_{s,\kappa_{s}}} \cdot  \mathbf{x}_{\mathcal{I}_{s,\kappa_{s}}}, \mathbf{p}^{\langle 2 \rangle} \cdot \mathbf{x}, \dots, \mathbf{p}^{\langle d-1 \rangle} \cdot \mathbf{x} \Big).
\end{align*}
\end{construction}
Clearly, the obtained multiple-locality Pyramid code has minimum Hamming distance $d$ and length $k+ \sum_{i \in \inter{s}}\lceil k_i / \loc{i} \rceil + d-2$.
\begin{lemma}[Singleton-Optimal Construction]
The ML-Pyramid code as in Construction~\ref{constr_PyramidCode} has $n_i= k_i + \lceil k_i /\loc{i} \rceil =  \lceil k_i (\loc{i}+1) /\loc{i} \rceil $ symbols with locality $\loc{i}$ for all $i \in \inter{s}$. We get from the bound in Thm.~\ref{theo_SingletonDifferentLocalitiesSeveral}:
\begin{align}
 d & \leq n - k + 2 - \sum_{i \in \inter{s \minus 1}} \left \lceil \frac{n_i}{\loc{i}+1} \right \rceil - \left \lceil \frac{k \minusb \sum \limits_{i \in \inter{s \minus 1}} \loc{i}\left \lceil \frac{n_i}{\loc{i}+1} \right \rceil }{\loc{s}} \right \rceil \nonumber \\
& = n - k + 2 - \sum_{i \in \inter{s \minus 1}} \left \lceil \frac{k_i}{\loc{i}} \right \rceil - \left \lceil \frac{k \minusb \sum \limits_{i \in \inter{s \minus 1}} \loc{i}  \left \lceil \frac{k_i}{\loc{i}} \right \rceil }{\loc{s}} \right \rceil \label{eq_SingletonAdaptedPyramidCodes}.
\end{align}
With $k_{s-1} = k - \sum_{i \in \inter{s-1}} k_i$, we obtain from~\eqref{eq_SingletonAdaptedPyramidCodes}:
\begin{align*}
d & \leq n - k + 2 - \sum_{i \in \inter{s-1}} \left \lceil \frac{k_i}{\loc{i}} \right \rceil - \left \lceil \frac{k_{s-1} }{\loc{s}} \right \rceil, 
\end{align*}
which coincides with the minimum Hamming distance of the constructed ML-Pyramid code.
\end{lemma}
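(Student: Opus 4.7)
The proof has two essentially independent parts: a structural verification that the construction delivers the claimed locality profile with minimum Hamming distance $d$, and an arithmetic check that the bound of Theorem~\ref{theo_SingletonDifferentLocalitiesSeveral} evaluates to $d$ under these parameters.

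For the structural part, I would first handle the locality count. Fix $i \in \inter{s}$ and a block $\mathcal{I}_{i,j}$ of the partition of $\locset{i}$. Together with the newly introduced local parity $\mathbf{p}^{\langle 1 \rangle}_{\mathcal{I}_{i,j}} \cdot \mathbf{x}_{\mathcal{I}_{i,j}}$, the $|\mathcal{I}_{i,j}| \leq \loc{i}$ corresponding information coordinates form a repair set of size at most $\loc{i}+1$, and each member of this set is a linear combination of the remaining ones. Aggregating over $j \in \inter{\kappa_i}$ gives $n_i = k_i + \kappa_i = k_i + \lceil k_i/\loc{i}\rceil$ coordinates of locality $\loc{i}$, and summing over $i$ yields total length $n = k + \sum_{i \in \inter{s}} \lceil k_i/\loc{i}\rceil + d - 2$.

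Next I would show that the minimum Hamming distance equals $d$. The key observation is that the blocks $\{\mathcal{I}_{i,j}\}$ partition $\inter{k}$, so the sum of all local parities reproduces the global parity $\mathbf{p}^{\langle 1 \rangle} \cdot \mathbf{x}$ of the underlying MDS mother code $\CYC$. Thus there is a linear collapse map that replaces the $\sum_{i} \kappa_i$ local parity coordinates by their single sum; it sends every ML-Pyramid codeword to a codeword of $\CYC$ while reducing support size by at most $\sum_{i} \kappa_i - 1$. A standard Pyramid-style case analysis (as in~\cite{huang_pyramid_2007}) on whether the collapse annihilates the global-parity position then rules out ML-Pyramid codewords of weight less than $d$, and achievability of weight $d$ follows by lifting a weight-$d$ MDS codeword in which all but one block of local parities vanish. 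This distance step is the main obstacle: a naive parity count does not by itself preclude low-weight codewords whose support is concentrated in a single block $\mathcal{I}_{i,j}$, and one must exploit the MDS structure of the mother code on each block.

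For the arithmetic part, I would invoke Theorem~\ref{theo_SingletonDifferentLocalitiesSeveral} and establish the identity $\lceil n_i/(\loc{i}+1)\rceil = \lceil k_i/\loc{i}\rceil$ by a short case split on the remainder $k_i \bmod \loc{i}$: writing $k_i = a\loc{i} + b$ with $0 \leq b < \loc{i}$, the case $b=0$ gives $n_i = a(\loc{i}+1)$, and the case $b>0$ gives $n_i = a(\loc{i}+1) + (b+1)$ with $1 \leq b+1 \leq \loc{i}$, so in both cases the ceiling equals $\lceil k_i/\loc{i}\rceil$. Substituting this identity into the bound reproduces the first displayed expression in the statement. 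Under the assumption $\loc{i} \mid k_i$ for $i \in \inter{s-1}$ (so that $\loc{i}\lceil k_i/\loc{i}\rceil = k_i$), together with $k_s = k - \sum_{i \in \inter{s-1}} k_i$, the inner ceiling collapses to $\lceil k_s/\loc{s}\rceil$; expanding $n$ via the length formula above then yields exactly $d$ on the right-hand side, completing Singleton-optimality.
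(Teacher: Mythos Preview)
Your proposal is correct and follows the same computational route as the paper: verify the locality profile and minimum distance of the construction, then plug the resulting $n_i$ into the bound of Theorem~\ref{theo_SingletonDifferentLocalitiesSeveral} and simplify. You are in fact more thorough than the paper, which asserts the locality count and the equality $d' = d$ with a single ``Clearly'' before the lemma, leaves the identity $\lceil n_i/(\loc{i}+1)\rceil = \lceil k_i/\loc{i}\rceil$ unjustified, and silently uses the divisibility hypothesis $\loc{i} \mid k_i$ for $i \in \inter{s-1}$ in its last simplification---an assumption you correctly make explicit.
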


\section{Alphabet-Dependent Bound for ML-LRCs} \label{sec_Alphabet-DependentBound}
\subsection{Bound and Shortening}
For a given $(n,k,d)_q$ code $\CYC$ and a subset $\mathcal{I} \subseteq \inter{n}$ define as in \cite{cadambe_upper_2013}
\begin{equation} \label{eq_Entropy}
H(\mathcal{I}) \defeq \frac{\log |\{ \mathbf{x}_{\mathcal{I}} : \mathbf{x} \in \CYC  \}|}{\log q}.
\end{equation}
The bound given in Thm.~\ref{theo_CMBound} follows from the following two lemmas.
\begin{lemma}[{\cite[Lemma 1]{cadambe_upper_2013}}] \label{lem_CM1}
For a given $(n,k,d)_q$ $r$-local code $\CYC$, an integer $t \in \inter{\lceil k/r \rceil} $, a set $\mathcal{I} \subseteq \inter{n}$ with $|\mathcal{I}| =  t(r+1)$ and $H(\mathcal{I}) \leq tr$ as defined in~\eqref{eq_Entropy} exists (and is constructed explicitly).
\end{lemma}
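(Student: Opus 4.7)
The plan is to exploit the fact that $r$-locality supplies, for every coordinate $c \in \inter{n}$, a repair set $R(c) \ni c$ with $|R(c)| \leq r+1$ in which the symbol at position $c$ is a fixed $\Fq$-linear combination of the symbols at $R(c) \setminus \{c\}$. In terms of the quantity defined in~\eqref{eq_Entropy}, this gives $H(R(c)) \leq r$: the projection of $\CYC$ onto $R(c)$ has $\Fq$-dimension at most $r$ because the $c$-th coordinate is redundant. The lemma then asserts that one can accumulate such repair sets until $\mathcal{I}$ has exactly $t(r+1)$ coordinates while keeping $H(\mathcal{I})$ below $tr$.

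I would proceed greedily. Set $\mathcal{I}_0 := \emptyset$ and, as long as $|\mathcal{I}_{i-1}| < t(r+1)$, select any coordinate $c_i \in \inter{n} \setminus \mathcal{I}_{i-1}$ and put $\mathcal{I}_i := \mathcal{I}_{i-1} \cup R(c_i)$. Existence of $c_i$ at every step is ensured by the implicit bound $t(r+1) \leq n$ built into Thm.~\ref{theo_CMBound}. Two complementary per-step increments are then available. On the one hand, $H(R(c_i)) \leq r$ combined with subadditivity yields $H(\mathcal{I}_i) - H(\mathcal{I}_{i-1}) \leq r$. On the other hand, since $c_i$ is linearly determined by $R(c_i)\setminus\{c_i\}$ and $c_i \notin \mathcal{I}_{i-1}$, removing the $c_i$-coordinate from $\mathcal{I}_i$ leaves the entropy unchanged, so
\begin{equation*}
H(\mathcal{I}_i) - H(\mathcal{I}_{i-1}) \;\leq\; \bigl|R(c_i) \setminus \{c_i\} \setminus \mathcal{I}_{i-1}\bigr| \;=\; \bigl|R(c_i) \setminus \mathcal{I}_{i-1}\bigr| - 1.
\end{equation*}
Combining them yields the joint invariant $H(\mathcal{I}_j) \leq \min\bigl(jr,\; |\mathcal{I}_j| - j\bigr)$ at every stage.

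The main obstacle is reaching the cardinality $t(r+1)$ \emph{exactly}: when repair sets overlap, $|\mathcal{I}_j|$ can remain strictly below $t(r+1)$ after $t$ greedy rounds. I would resolve this by letting $T$ be the first index with $|\mathcal{I}_T| \geq t(r+1)$ and setting $\mathcal{I} := \mathcal{I}_{T-1} \cup S$, where $S$ is any subset of $R(c_T) \setminus \mathcal{I}_{T-1}$ of size $t(r+1) - |\mathcal{I}_{T-1}|$. If $T = t$, then the coarse inequality $|\mathcal{I}_t| \leq t(r+1)$ forces $|\mathcal{I}_t| = t(r+1)$ and the coarse entropy bound $H(\mathcal{I}_t) \leq tr$ applies directly. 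If instead $T \geq t+1$, the refined invariant at step $T-1$ together with the trivial bound $H(\mathcal{I}_{T-1} \cup S) - H(\mathcal{I}_{T-1}) \leq |S|$ gives $H(\mathcal{I}) \leq |\mathcal{I}_{T-1}| - (T-1) + (t(r+1) - |\mathcal{I}_{T-1}|) = t(r+1) - T + 1 \leq tr$. I expect the careful bookkeeping in this final step, together with establishing the sharper per-step increment, to be the delicate part of the argument; once both are in place, the cardinality and entropy constraints fall out simultaneously and the construction is manifestly explicit.
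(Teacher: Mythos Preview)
Your proposal is correct and follows essentially the same greedy repair-set construction that the paper invokes (via Algorithm~\ref{algo_ConstructEntropySet} in the proof of Lemma~\ref{lem_DifferentLocalities}, which the paper states differs from the original Cadambe--Mazumdar algorithm only in one line). The only cosmetic difference is that the paper pads with extra coordinates $\mathcal{S}^{\langle m \rangle}$ at \emph{each} round so that $|\mathcal{I}^{\langle m \rangle}| = m(r+1)$ is maintained throughout, and then obtains $H(\mathcal{I}) = H(\mathcal{I} \setminus \{a_0,\dots,a_{t-1}\}) \leq t(r+1)-t = tr$ in one stroke; you instead defer all padding to the final step and compensate with the sharper running invariant $H(\mathcal{I}_j) \leq \min(jr,\,|\mathcal{I}_j|-j)$, which is the same accounting unrolled incrementally.
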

\begin{lemma}[{\cite[Lemma 2]{cadambe_upper_2013}}] \label{lem_CM2}
For an $(n,k,d)_q$ code $\CYC$ with $\mathcal{I} \subseteq \inter{n}$ and $H(\mathcal{I}) \leq m$ an $(n-|\mathcal{I}|,k-m,d)_q$ code exists.
\end{lemma}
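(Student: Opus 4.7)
The plan is to prove this by a standard pigeonhole-and-restriction argument: I will single out a large fibre of codewords that all agree on $\mathcal{I}$, and then read off their restrictions to the complement positions.

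First I would fix notation. Let $\pi_{\mathcal{I}}:\CYC\to\F{q}^{|\mathcal{I}|}$ denote the projection $\mathbf{c}\mapsto\mathbf{c}_{\mathcal{I}}$, and let $\pi_{\bar{\mathcal{I}}}$ denote the projection onto the complement $\bar{\mathcal{I}}\defeq\inter{n}\setminus\mathcal{I}$. By the definition~\eqref{eq_Entropy}, the image $\pi_{\mathcal{I}}(\CYC)$ has cardinality $q^{H(\mathcal{I})}\leq q^{m}$. Since $|\CYC|=q^{k}$, the pigeonhole principle yields some $\mathbf{y}^{\star}\in\pi_{\mathcal{I}}(\CYC)$ with fibre $\CYC_{\mathbf{y}^{\star}}\defeq\pi_{\mathcal{I}}^{-1}(\mathbf{y}^{\star})$ of size at least $q^{k}/q^{m}=q^{k-m}$.

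Next I would define the candidate code as the image of this fibre under the complement projection, $\CYC'\defeq\pi_{\bar{\mathcal{I}}}(\CYC_{\mathbf{y}^{\star}})\subseteq\F{q}^{n-|\mathcal{I}|}$. Two observations are central. First, $\pi_{\bar{\mathcal{I}}}$ restricted to $\CYC_{\mathbf{y}^{\star}}$ is injective: any two distinct codewords in $\CYC_{\mathbf{y}^{\star}}$ coincide on $\mathcal{I}$, so they must differ somewhere in $\bar{\mathcal{I}}$, and hence their complement-restrictions are distinct. Consequently $|\CYC'|=|\CYC_{\mathbf{y}^{\star}}|\geq q^{k-m}$. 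Second, because distinct codewords in $\CYC_{\mathbf{y}^{\star}}$ agree on all of $\mathcal{I}$, every coordinate on which the original codewords differ lies in $\bar{\mathcal{I}}$; since the minimum Hamming distance of $\CYC$ is at least $d$, the restrictions in $\CYC'$ still differ in at least $d$ positions.

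Finally I would conclude by reading off the parameters. The code $\CYC'$ has length $n-|\mathcal{I}|$, minimum distance at least $d$, and cardinality at least $q^{k-m}$. Choosing an arbitrary subcode of $\CYC'$ of size exactly $q^{k-m}$ produces the required $(n-|\mathcal{I}|,k-m,d)_{q}$ code. There is no genuine obstacle here; the only care needed is to note that the argument works for non-linear codes — only the counting $|\CYC|=q^{k}$ (which is the definition of the dimension of a possibly non-linear code used throughout the paper) and the fact that equal-on-$\mathcal{I}$ codewords differ only on $\bar{\mathcal{I}}$ are invoked, neither of which requires linearity.
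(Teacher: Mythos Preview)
The paper does not give its own proof of this lemma; it is simply quoted from Cadambe--Mazumdar with a citation, so there is no in-paper argument to compare against. Your pigeonhole-and-restriction proof is exactly the standard argument from that source and is correct. One minor quibble: in the non-linear setting $q^{k-m}$ need not be an integer, so the final step of ``choosing an arbitrary subcode of size exactly $q^{k-m}$'' may not be literally possible; what you have actually shown is the existence of a code of length $n-|\mathcal{I}|$, dimension at least $k-m$, and distance at least $d$, which is all that the downstream use (bounding $k_{\text{opt}}^{(q)}$) requires and is how the original lemma is stated.
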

In the following lemma, we refine Lemma~\ref{lem_CM1} slightly and generalize it to an $((n_1, \loc{1}),(n_2, \loc{2}), \dots , (n_{s}, \loc{s}))$-local ML-LRC.
\begin{lemma} \label{lem_DifferentLocalities}
Let $\locset{i}$ with $|\locset{i}| = n_i $ for all $i \in \inter{s}$ be the locality sets of a given $(n,k,d)_q$ $((n_1, \loc{1}),(n_2, \loc{2}),\dots,(n_{s}, \loc{s}))$-local code $\CYC$ as in Definition~\ref{def_MultipleLRC}. Then, there exist $s$ sets $\mathcal{I}_i \subseteq  \locset{i}$ for all $i \in \inter{s}$ with
\begin{align*}
|\mathcal{I}_i|  =
\begin{cases}
t_i(\loc{i} +1) & \text{for } t_i \leq n_{i}/(\loc{i}+1), \; \text{if} \; (\loc{i}+1) \mid n_i\\
n_i & \text{for } t_i \leq \lceil n_i/(\loc{i}+1) \rceil, \; \text{if} \; (\loc{i}+1) \nmid n_i,
\end{cases}
\end{align*}
and $H(\mathcal{I}_i)  \leq t_i \loc{i}$ (for both cases) and for all $i \in \inter{s}$.
\end{lemma}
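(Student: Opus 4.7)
The plan is to run the Cadambe--Mazumdar selection procedure from Lemma~\ref{lem_CM1} independently inside each locality class $\locset{i}$. This reduction works because, by Definition~\ref{def_MultipleLRC}, every symbol of $\locset{i}$ is a linear combination of at most $\loc{i}$ other symbols that also lie inside $\locset{i}$. Thus the induced local-repair structure on $\locset{i}$ is exactly that of a classical $\loc{i}$-local setup, and the set $\mathcal{I}_i$ can be built using only symbols and repair relations in $\locset{i}$, with no interaction between distinct classes.

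Concretely, for each fixed $i \in \inter{s}$ I would proceed greedily: initialize $\mathcal{I}_i = \emptyset$ and, at iteration $j = 1, 2, \dots, t_i$, pick a pivot $c_j \in \locset{i} \setminus \mathcal{I}_i$ together with a repair set $R_{c_j} \subseteq \locset{i}$ of size at most $\loc{i}$ for which $c_j$ is a linear combination of $R_{c_j}$, and set $\mathcal{I}_i \leftarrow \mathcal{I}_i \cup \{c_j\} \cup R_{c_j}$. Two per-iteration facts then drive the argument. First, $|\mathcal{I}_i|$ grows by at most $\loc{i}+1$, and with a disjointness-aware choice of pivot (see below) by exactly $\loc{i}+1$ as long as sufficiently many uncovered symbols remain. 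Second, the entropy satisfies $H(\mathcal{I}_i \cup \{c_j\} \cup R_{c_j}) - H(\mathcal{I}_i) \leq \loc{i}$, since $c_j$ is deterministically recovered from $R_{c_j}$ so that $\{c_j\} \cup R_{c_j}$ contributes at most $\loc{i}$ new $\Fq$-degrees of freedom. Summing this bound over $t_i$ iterations gives $H(\mathcal{I}_i) \leq t_i \loc{i}$ in both cases of the statement.

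The case split on whether $(\loc{i}+1) \mid n_i$ merely records how the greedy loop terminates. In the divisible case, for $t_i \leq n_i/(\loc{i}+1)$ I can choose the pivots so that the repair groups are pairwise disjoint — this is feasible exactly because enough fresh pivots remain at every step — yielding $|\mathcal{I}_i| = t_i(\loc{i}+1)$. In the non-divisible case, disjoint pivoting continues until fewer than $\loc{i}+1$ uncovered symbols are left; the final iteration is forced to overlap previously covered symbols, but by then $\mathcal{I}_i$ has engulfed all of $\locset{i}$, so $|\mathcal{I}_i| = n_i$ when $t_i = \lceil n_i/(\loc{i}+1) \rceil$, while the entropy bound $H(\mathcal{I}_i) \leq t_i \loc{i}$ is unaffected. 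The main obstacle I anticipate is precisely this boundary bookkeeping: one has to confirm that a disjoint pivot is available at every step before the saturation point, which follows from the greedy availability of an uncovered element as long as $|\mathcal{I}_i| \leq n_i - 1$, together with the fact that its repair set is constrained to lie inside $\locset{i}$.
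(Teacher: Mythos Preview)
Your overall plan---run the Cadambe--Mazumdar greedy selection inside each locality class $\locset{i}$---is exactly what the paper does, and your entropy-increment inequality is correct. The gap is the size claim. You assert that a ``disjointness-aware choice of pivot'' makes each iteration add exactly $\loc{i}+1$ fresh coordinates, but the repair set $R_{c_j}$ is dictated by the code, not by you: a pivot $c_j \in \locset{i}\setminus\mathcal{I}_i$ may well have $R_{c_j}$ overlapping the current $\mathcal{I}_i$ (for instance, several distinct symbols in $\locset{i}$ can share the same repair set), so the growth can be strictly less than $\loc{i}+1$ no matter which fresh pivot you pick. Having ``enough fresh pivots'' does not force the existence of a pivot with a fresh repair set.

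The paper closes this gap by \emph{padding}: after adjoining $\{a_m\}\cup\mathcal{R}_i^{\langle a_m\rangle}$, it adds an arbitrary filler set $\mathcal{S}^{\langle m\rangle}\subseteq\locset{i}$ so that the cumulative size is forced to $\min\big((m{+}1)(\loc{i}{+}1),\,n_i\big)$. This is the only place where the paper's algorithm differs from the original Cadambe--Mazumdar one, and it is precisely what yields the exact size $|\mathcal{I}_i|=t_i(\loc{i}+1)$ in Case~a) and $|\mathcal{I}_i|=n_i$ in Case~b). The entropy bound is unaffected: since each pivot $a_m$ is still determined by $\mathcal{R}_i^{\langle a_m\rangle}\subseteq\mathcal{I}_i$, one has $H(\mathcal{I}_i)=H(\mathcal{I}_i\setminus\{a_0,\dots,a_{t_i-1}\})\le|\mathcal{I}_i|-t_i$, which gives $t_i\loc{i}$ in Case~a) and $n_i-\lceil n_i/(\loc{i}+1)\rceil\le t_i\loc{i}$ in Case~b). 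Your incremental entropy argument also survives the padding (the $\loc{i}+1$ new coordinates at step $m$ include the redundant pivot $a_m$), but you do need the padding step to make the size statement true.
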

\begin{proof}
Similar to \cite[Proof of Lemma 1]{cadambe_upper_2013}, we construct the set $\mathcal{I}_{i}$ explicitly. Let $\mathcal{R}^{\langle j \rangle}_{i}$ denote the repair set of the $j$-th coordinate that belongs to $\locset{i}$. Clearly $|\mathcal{R}_{i}^{\langle j \rangle}| \leq \loc{i}$. Algorithm~\ref{algo_ConstructEntropySet} constructs the set $\mathcal{I}_{i}$.\\
\begin{center}
\begin{algorithm}[htb]
\label{algo_ConstructEntropySet}
\SetAlgoVlined
\DontPrintSemicolon
\LinesNumbered
\SetKwInput{KwSub}{Subroutines}
\SetKwInput{KwIn}{Input}
\SetKwInput{KwOut}{Output}
\newcommand\mycommfont[1]{\footnotesize\ttfamily{#1}}
\SetCommentSty{mycommfont}
\BlankLine
\KwIn{Set $\locset{i}$ with $|\locset{i}| = n_i$ and locality $\loc{i}$.\\
\textcolor{white}{\textbf{Input:} }Integer $t_i \in \inter{\lceil n_i/(\loc{i}+1) \rceil}$.
}
\BlankLine 
\BlankLine 
Select $a_0$ arbitrarily from $\locset{i}$.\\
$\mathcal{S}^{\langle 0  \rangle} \leftarrow \emptyset$.\\
\For{$m  \in \inter{t_i-1}$}
{
Select $a_m$ in  $\locset{i} \setminus \left( \bigcup_{l=0}^{m-1} \{a_m \cup {\mathcal{R}_{i}^{\langle a_l \rangle}} \cup \mathcal{S}^{\langle l \rangle} \right)$.\\
$\mathcal{I}^{\langle m \rangle}_i \leftarrow  \bigcup_{l=0}^{m-1} \{a_l\} \cup {\mathcal{R}_{i}^{\langle a_l \rangle}} \cup \mathcal{S}^{\langle l \rangle}$.\\ 
Select $\mathcal{S}^{\langle m \rangle} \subseteq \locset{i} \setminus \left( \{ a_m \} \cup \mathcal{R}_{i}^{\langle a_m \rangle} \cup \mathcal{I}^{\langle m \rangle}_i \right)$ with\\
\hspace{-.15cm}$|\mathcal{S}^{\langle m \rangle}| = \min((m \text{+}1)(\loc{i} \text{+}1),n_i) \minusb | \{ a_m \} \cup \mathcal{R}_{i}^{\langle a_m \rangle} \cup \mathcal{I}^{\langle m \rangle}_i |$ 
}
$\mathcal{I}_i \leftarrow  \bigcup_{m=0}^{t_i-1} \{a_m \} \cup {\mathcal{R}_{i}^{\langle a_m \rangle}} \cup \mathcal{S}^{\langle m \rangle}$. 
\BlankLine
\BlankLine
\KwOut{Return the set $\mathcal{I}_i$.}
\caption{Construction of $\mathcal{I}_{i} \subseteq \locset{i}$.}
\end{algorithm}
\end{center}
Algorithm~\ref{algo_ConstructEntropySet} differs from the algorithm used in \cite[Proof of Lemma 1]{cadambe_upper_2013} only in Line 7, which ensures that the constructed set cannot have a cardinality greater than $n_i$. In the proof of \cite[Lemma 1]{cadambe_upper_2013}, it is shown that $H(\mathcal{I}_i) = H(\mathcal{I}_i\setminus \{a_0,a_1,\dots,a_{t_i-1} \})$.
Clearly, for:\\
Case a): If $t_i \leq \lfloor n_i/(\loc{i}+1) \rfloor$, then $|\mathcal{I}_{i}| = t_i(\loc{i}+1)$ and $H(\mathcal{I}_i) \leq t_i \loc{i}$. \\
Case b): If $t_i = \lceil n_i/(\loc{i}+1) \rceil$ and $(\loc{i}+1) \nmid n_i$, then $|\mathcal{I}_{i}| = n_i$ and 
\begin{align}
H(\mathcal{I}_i) & \leq n_i - \lceil n_i/(\loc{i}+1) \rceil \nonumber \\
 & = \left \lceil \frac{\loc{i}n_i}{\loc{i}+1} \right \rceil. \label{eq_TechnicalSet}
\end{align}
\end{proof}
The distinction between Cases a) and b) in Lemma~\ref{lem_DifferentLocalities} is not relevant for the bound of an $r$-local code, but becomes necessary if we want to bound the dimension of an $((n_1, \loc{1}),(n_2, \loc{2}),\dots,(n_{s}, \loc{s}))$-local ML-LRC.
\begin{lemma} \label{lem_SeveralEntropySets}
For an $(n,k,d)_q$ code $\CYC$ with $s$ pairwise disjoint sets $\mathcal{I}_i \subseteq \inter{n}$ with $H(\mathcal{I}_i) \leq m_i$ for all $i \in \inter{s}$, there exists an $(n-\sum_{i \in \inter{s}}|\mathcal{I}_i|,k-\sum_{i \in \inter{s}}m_i,d)$ code.
\end{lemma}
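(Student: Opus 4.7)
My approach is to reduce the $s$-set claim to a single application of Lemma~\ref{lem_CM2}. The key idea is to merge the disjoint sets into one and control the joint projection entropy by subadditivity. Concretely, the steps are: (i) define $\mathcal{I} \defeq \bigcup_{i \in \inter{s}} \mathcal{I}_i$ and note that pairwise disjointness gives $|\mathcal{I}| = \sum_{i \in \inter{s}} |\mathcal{I}_i|$; (ii) establish the subadditivity inequality $H(\mathcal{I}) \leq \sum_{i \in \inter{s}} H(\mathcal{I}_i)$ for the entropy defined in~\eqref{eq_Entropy}; (iii) combine with the hypotheses $H(\mathcal{I}_i) \leq m_i$ to obtain $H(\mathcal{I}) \leq \sum_{i \in \inter{s}} m_i$; and (iv) apply Lemma~\ref{lem_CM2} to $\CYC$ with the single set $\mathcal{I}$ and $m = \sum_{i \in \inter{s}} m_i$, producing the desired $(n - \sum_i |\mathcal{I}_i|,\, k - \sum_i m_i,\, d)_q$ code.

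The only nontrivial step is (ii), which I would justify as follows. Because the $\mathcal{I}_i$ are pairwise disjoint, every restriction $\mathbf{x}_{\mathcal{I}}$ with $\mathbf{x} \in \CYC$ decomposes coordinate-wise into the tuple $(\mathbf{x}_{\mathcal{I}_1}, \ldots, \mathbf{x}_{\mathcal{I}_s})$. Hence the joint projection image embeds coordinate-wise into the product of the individual images,
\[
\{\mathbf{x}_{\mathcal{I}} : \mathbf{x} \in \CYC\} \;\hookrightarrow\; \prod_{i \in \inter{s}} \{\mathbf{x}_{\mathcal{I}_i} : \mathbf{x} \in \CYC\},
\]
so the cardinality of the left-hand side is bounded by the product of the cardinalities on the right. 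Taking logarithms and dividing by $\log q$, as dictated by the definition in~\eqref{eq_Entropy}, gives exactly the subadditivity inequality. This argument does not use linearity, so it is consistent with the non-linear setting in which $\CYC$ is stated.

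I do not expect a genuine obstacle, since subadditivity of $H$ under disjoint union is a one-line counting argument and the remainder is a direct invocation of Lemma~\ref{lem_CM2}. An alternative route is induction on $s$: apply Lemma~\ref{lem_CM2} first to $\mathcal{I}_1$ to obtain a shortened code $\CYC^{(1)}$, and then observe that, because $\CYC^{(1)}$ arises as (a subset of) the projection of a subcode of $\CYC$ onto $\inter{n}\setminus \mathcal{I}_1$, the image $\{\mathbf{x}_{\mathcal{I}_j} : \mathbf{x} \in \CYC^{(1)}\}$ is contained in $\{\mathbf{x}_{\mathcal{I}_j} : \mathbf{x} \in \CYC\}$ for every $j \geq 2$, so the entropy bounds $H(\mathcal{I}_j) \leq m_j$ are inherited and one may iterate. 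I would prefer the direct union approach because it avoids any bookkeeping about the particular code produced by Lemma~\ref{lem_CM2}.
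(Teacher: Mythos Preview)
Your proposal is correct. The paper's own proof is the one-line iterative route you list as the alternative: ``Apply Lemma~\ref{lem_CM2} consecutively $s$ times and the statement follows.'' Your preferred approach---merge the disjoint sets, bound $H(\mathcal{I})$ by subadditivity, and invoke Lemma~\ref{lem_CM2} once---is a genuinely different (and arguably cleaner) path. The advantage of your union argument is exactly what you point out: it requires nothing about the internal construction of the code produced by Lemma~\ref{lem_CM2}, only its existence. The paper's iterative application, by contrast, implicitly relies on the fact that the code output by Lemma~\ref{lem_CM2} lives on the coordinates $\inter{n}\setminus\mathcal{I}_1$ as a sub-projection of $\CYC$, so that the bounds $H(\mathcal{I}_j)\le m_j$ survive for $j\ge 2$; you correctly flag this bookkeeping issue, which the paper leaves tacit.
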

\begin{proof}
Apply Lemma~\ref{lem_CM2} consecutively $s$ times and the statement follows.
\end{proof}
From Lemma~\ref{lem_DifferentLocalities} and Lemma~\ref{lem_SeveralEntropySets} the following theorem follows.
\begin{theorem}[Alphabet-Dependent Bound for ML-LRCs with Two Different Localities] \label{thm_AlphabetDiffLocalities}
Let $\CYC$ be an $(n,k,d)_q$ $((n_1, \loc{1}), (n_2, \loc{2}))$-local code. 
Define
\begin{equation} \label{eq_CMboundTwoLocalities1}
\begin{split}
& k_{t_1,t_2}  \defeq \\
& t_1 \loc{1} \text{+} t_2 \loc{2} \text{+} k_{\text{opt}}^{(q)}\left(n \minusb \min(n_1,t_1( \loc{1} \text{+}1)) \minusb \min(n_2,t_2(\loc{2}\text{+}1)), d \right).
\end{split}
\end{equation}
The dimension of $\CYC$ is bounded from above by
\begin{equation} \label{eq_CMboundTwoLocalities2}
k \leq \min_{\substack{t_1 \in \inter{t_1^{\star}},\\ t_2 \in \inter{t_2^{\star}}}} k_{t_1,t_2}, 
\end{equation}
where 
\begin{align}
t_1^{\star} & \defeq \left \lceil \frac{n_1}{\loc{1}+1} \right \rceil, \label{eq_CondRunning1}\\
t_2^{\star} & \defeq \left \lfloor \frac{k - 1 - t_1 \loc{1}}{\loc{2}} \right \rfloor, \quad \forall t_1 \in \inter{t_1^{\star}}.\label{eq_CondRunning2}
\end{align}
\end{theorem}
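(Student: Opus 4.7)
The plan is to lift the single-locality Cadambe--Mazumdar argument (Thm.~\ref{theo_CMBound}) to two locality classes by applying Lemma~\ref{lem_DifferentLocalities} independently inside each class and then chaining two shortenings via Lemma~\ref{lem_SeveralEntropySets}. Fix any admissible pair $(t_1,t_2)$ with $t_1 \in \inter{t_1^{\star}}$ and $t_2 \in \inter{t_2^{\star}}$.

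First, since $\locset{1}$ and $\locset{2}$ are disjoint subsets partitioning $\inter{n}$ (Definition~\ref{def_MultipleLRC}), each equipped with its own locality parameter, I would apply Lemma~\ref{lem_DifferentLocalities} separately to $(\locset{1},\loc{1},t_1)$ and to $(\locset{2},\loc{2},t_2)$. This produces two sets $\mathcal{I}_1 \subseteq \locset{1}$ and $\mathcal{I}_2 \subseteq \locset{2}$ with cardinalities $|\mathcal{I}_i| = \min(n_i,\, t_i(\loc{i}+1))$ and entropies $H(\mathcal{I}_i) \leq t_i\loc{i}$. Disjointness of $\mathcal{I}_1$ and $\mathcal{I}_2$ is automatic because $\locset{1}\cap\locset{2} = \emptyset$.

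Next, I would feed these two disjoint entropy-bounded sets into Lemma~\ref{lem_SeveralEntropySets}, obtaining an $(n - |\mathcal{I}_1| - |\mathcal{I}_2|,\, k - t_1\loc{1} - t_2\loc{2},\, d)_q$ code. By definition of $k_{\text{opt}}^{(q)}$ its dimension cannot exceed $k_{\text{opt}}^{(q)}(n - |\mathcal{I}_1| - |\mathcal{I}_2|,\,d)$, and substituting $|\mathcal{I}_i| = \min(n_i, t_i(\loc{i}+1))$ rearranges directly into $k \leq k_{t_1,t_2}$ from~\eqref{eq_CMboundTwoLocalities1}. Taking the minimum over all admissible pairs yields~\eqref{eq_CMboundTwoLocalities2}.

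Finally, the ranges in~\eqref{eq_CondRunning1}--\eqref{eq_CondRunning2} deserve a brief justification. The cap $t_1^{\star} = \lceil n_1/(\loc{1}+1) \rceil$ is exactly the largest value for which Lemma~\ref{lem_DifferentLocalities} guarantees an admissible subset of $\locset{1}$; the analogous cap applies to $t_2$, but the binding constraint is the sharper requirement $t_2\loc{2} \leq k-1-t_1\loc{1}$, which ensures that the shortened code has dimension at least $1$ and hence that $k_{\text{opt}}^{(q)}(\cdot,d)$ is a meaningful (non-trivial) upper bound rather than a vacuous $\geq 0$ statement. The one point of bookkeeping I expect to be the main technical hurdle is Case b) of Lemma~\ref{lem_DifferentLocalities}, i.e., when $(\loc{i}+1)\nmid n_i$ at the extremal $t_i = \lceil n_i/(\loc{i}+1)\rceil$: there the natural entropy estimate is $\lceil \loc{i} n_i/(\loc{i}+1) \rceil$ rather than $t_i\loc{i}$, but since $\lceil \loc{i} n_i/(\loc{i}+1) \rceil \leq \loc{i}\lceil n_i/(\loc{i}+1) \rceil = t_i\loc{i}$, the uniform bound $t_i\loc{i}$ used in the definition of $k_{t_1,t_2}$ remains valid, so a single expression covers both divisibility regimes of each locality class.
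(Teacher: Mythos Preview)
Your proposal is correct and follows essentially the same approach as the paper: apply Lemma~\ref{lem_DifferentLocalities} to each locality class to obtain the two disjoint entropy-bounded sets, then invoke Lemma~\ref{lem_SeveralEntropySets} for $s=2$ and compare against $k_{\text{opt}}^{(q)}$. Your justification of the ranges $t_1^{\star},t_2^{\star}$ and the handling of the non-divisible Case~b) are spelled out in more detail than the paper's terse argument, but the substance is identical.
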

\begin{proof}
The expressions as in~\eqref{eq_CMboundTwoLocalities1} and \eqref{eq_CMboundTwoLocalities2} follow from Lemma~\ref{lem_DifferentLocalities} and Lemma~\ref{lem_SeveralEntropySets} for $s=2$.

The value of $t_1$ can be bounded by $\min (\lceil n_1/(\loc{1}+1) \rceil, \lfloor (k-1)/\loc{1} \rfloor)$ similar as in the case of an $\loc{}$-local LRC. We assume that $\loc{1} \lceil n_1 / (\loc{1}+1) \rceil < k-1$ and therefore we obtain~\eqref{eq_CondRunning1}. 
The maximal value for $t_2$ follows from the fact that for  $ t_2 > t_2^*$ the expression in~\eqref{eq_CMboundTwoLocalities2} is at least $k$.
Clearly, $t_2 \leq \lceil n_2/(\loc{2}+1) \rceil$. It is well-known that the rate of an $\loc{}$-local code is at most $\loc{}/(\loc{}+1)$ and due to the fact that an $((n_1, \loc{1}), (n_2, \loc{2}))$-local code is also an $\loc{2}$-local code, we have:
\begin{equation*}
\left \lfloor \frac{k-1-\loc{1}t_1}{\loc{2}} \right \rfloor \leq \left \lceil \frac{n_2}{\loc{2}+1} \right \rceil,
\end{equation*}
and therefore the maximal $t_2^{\star}$ as in~\eqref{eq_CondRunning2} follows.
\end{proof}
\begin{corollary}[Singleton vs. Alphabet-Dependent Bound] \label{cor_SingletonVsAlphabetTwo}
The Singleton bound as in Thm.~\ref{theo_SingletonDifferentLocalities} for an $\LIN{n}{k}{d}{q}$ $((n_1, \loc{1}), (n_2, \loc{2}))$-local code is weaker than the bound given in Thm.~\ref{thm_AlphabetDiffLocalities}.
\end{corollary}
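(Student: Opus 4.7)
The plan is to show that substituting the classical (alphabet-oblivious) Singleton bound $k_{\text{opt}}^{(q)}(N,d)\leq N-d+1$ into the alphabet-dependent bound of Theorem~\ref{thm_AlphabetDiffLocalities} and picking a suitable pair $(t_1,t_2)$ recovers exactly the Singleton bound of Theorem~\ref{theo_SingletonDifferentLocalities}. Since $k_{\text{opt}}^{(q)}(N,d)$ is in general strictly smaller than $N-d+1$, the alphabet-dependent bound can only be tighter (and is strictly tighter whenever no MDS code of length $N-\text{(stuff)}$, dimension bounded by a Singleton expression, and minimum distance $d$ exists over $\F{q}$).

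The first step is to apply the classical Singleton bound to the term $k_{\text{opt}}^{(q)}\!\left(n-\min(n_1,t_1(\loc{1}+1))-\min(n_2,t_2(\loc{2}+1)),d\right)$ in~\eqref{eq_CMboundTwoLocalities1}, obtaining the upper estimate
\begin{equation*}
k_{t_1,t_2}\leq t_1\loc{1}+t_2\loc{2}+n-\min(n_1,t_1(\loc{1}+1))-\min(n_2,t_2(\loc{2}+1))-d+1.
\end{equation*}
The second step is to choose $t_1=\lceil n_1/(\loc{1}+1)\rceil$, which is admissible by~\eqref{eq_CondRunning1} and forces $\min(n_1,t_1(\loc{1}+1))=n_1$, so that the right-hand side simplifies to
\begin{equation*}
\loc{1}\left\lceil\tfrac{n_1}{\loc{1}+1}\right\rceil+t_2\loc{2}+n-n_1-\min(n_2,t_2(\loc{2}+1))-d+1.
\end{equation*}

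The third step is to choose $t_2=\bigl\lceil(k-\loc{1}\lceil n_1/(\loc{1}+1)\rceil)/\loc{2}\bigr\rceil$, which (by the rate argument used to derive~\eqref{eq_CondRunning2} in the proof of Theorem~\ref{thm_AlphabetDiffLocalities}) lies in $\inter{t_2^{\star}}$ and satisfies $t_2(\loc{2}+1)\leq n_2$, so that $\min(n_2,t_2(\loc{2}+1))=t_2(\loc{2}+1)$. Substituting and simplifying (the terms $t_2\loc{2}-t_2(\loc{2}+1)=-t_2$ combine cleanly) produces
\begin{equation*}
k_{t_1,t_2}\leq n-d+2-\left\lceil\tfrac{n_1}{\loc{1}+1}\right\rceil-\left\lceil\tfrac{k-\loc{1}\lceil n_1/(\loc{1}+1)\rceil}{\loc{2}}\right\rceil,
\end{equation*}
which is exactly the upper bound on $k$ implied by Theorem~\ref{theo_SingletonDifferentLocalities}. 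Since this choice of $(t_1,t_2)$ is only one feasible point in the minimization defining the alphabet-dependent bound, we conclude that the alphabet-dependent bound is no larger than the Singleton bound, proving the corollary.

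The main obstacle is the fourth step: careful bookkeeping with the ceilings and with the case distinction $t_2(\loc{2}+1)\leq n_2$ versus $t_2(\loc{2}+1)>n_2$. The case $t_2(\loc{2}+1)>n_2$ must be ruled out (or handled separately) using the rate constraint $k\leq \tfrac{\loc{1}}{\loc{1}+1}n_1+\tfrac{\loc{2}}{\loc{2}+1}n_2$ for an ML-LRC, exactly as invoked in the proof of Theorem~\ref{thm_AlphabetDiffLocalities}; otherwise the chosen $t_2$ would sit outside $\inter{t_2^{\star}}$ and the above substitution would be illegitimate.
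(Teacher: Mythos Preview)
Your approach shares the paper's key step---bounding $k_{\text{opt}}^{(q)}(N,d)$ by the classical Singleton bound $N-d+1$---but you carry it out directly while the paper argues by contradiction: it assumes $d$ strictly exceeds the right-hand side of~\eqref{eq_SingletonExp1}, plugs that into $k\le n-d+1-t_1-t_2$ at the pair $(t_1^\star,t_2^\star)$, and obtains $k<k$.

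There is a genuine off-by-one problem in your choice of $t_2$. For a positive integer $x$ one has $\lceil x/\loc{2}\rceil=\lfloor (x-1)/\loc{2}\rfloor+1$, so with $x=k-\loc{1}\lceil n_1/(\loc{1}+1)\rceil$ your choice
\[
t_2=\left\lceil\frac{k-\loc{1}\lceil n_1/(\loc{1}+1)\rceil}{\loc{2}}\right\rceil
=\left\lfloor\frac{k-1-\loc{1}\,t_1^\star}{\loc{2}}\right\rfloor+1
=t_2^\star+1,
\]
which lies \emph{outside} the minimization range $[t_2^\star]$ in~\eqref{eq_CMboundTwoLocalities2}. Your claim that this $t_2$ ``lies in $[t_2^\star]$'' is therefore false, and the subsequent inequality is not a legitimate instance of Theorem~\ref{thm_AlphabetDiffLocalities} as stated. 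Relatedly, your final displayed simplification is not an identity in general: after resolving the two minima you actually get $k_{t_1,t_2}\le \loc{1}t_1^\star+n-n_1-t_2-d+1$, and rewriting this as $n-d+2-t_1^\star-t_2$ requires $(\loc{1}+1)t_1^\star\le n_1+1$, which fails whenever $n_1\not\equiv 0,\,\loc{1}\pmod{\loc{1}+1}$.

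The cleanest fix is exactly what the paper does: keep $t_2=t_2^\star$ (which \emph{is} admissible), assume for contradiction that $d$ exceeds the Singleton bound~\eqref{eq_SingletonExp1}, and use the identity $\lceil x/\loc{2}\rceil-\lfloor (x-1)/\loc{2}\rfloor=1$ to arrive at $k<k$. Your direct route can be repaired by invoking the underlying Lemmas~\ref{lem_DifferentLocalities} and~\ref{lem_SeveralEntropySets} (which hold for $t_2$ up to $\lceil n_2/(\loc{2}+1)\rceil$, not just $t_2^\star$), but as written it does not go through.
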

\begin{proof}
We first bound the dimension $k_{\text{opt}}^{(q)}$ by the locality-unaware Singleton bound.
\begin{align} 
k  & \leq \min_{t_1, t_2} \Big( t_1 \loc{1} + t_2 \loc{2} +  k_{\text{opt}}^{(q)}(n-t_1(\loc{1}+1)-t_2(\loc{2}+1),d) \Big) \nonumber \\
& \leq t_1 \loc{1} + t_2 \loc{2} + n - t_1(\loc{1}+1)- t_2(\loc{2}+1)-d+1 \nonumber \\
& = n-d+1-t_1-t_2 \label{eq_FirstCMBound}.
\end{align}
We assume that $ d > n - k + 2 - \lceil n_1/(\loc{1}+1) \rceil - \lceil (k- \loc{1}\lceil n_1/(\loc{1}+1) \rceil) /\loc{2} \rceil$ and inserted in~\eqref{eq_FirstCMBound} leads to:
\begin{align} 
k & < n - n + k - 2 +  \left \lceil \frac{n_1}{\loc{1}+1} \right \rceil + \left \lceil \frac{k- \loc{1} \left \lceil \frac{n_1}{\loc{1}+1} \right \rceil}{\loc{2}} \right \rceil  \nonumber \\
 & \qquad + 1 - t_1 - t_2 \nonumber  \\
  & = k - 1 + \left \lceil \frac{n_1}{\loc{1}+1} \right \rceil + \left \lceil \frac{k- \loc{1} \left \lceil \frac{n_1}{\loc{1}+1} \right \rceil}{\loc{2}} \right \rceil - t_1 - t_2. \label{eq_ContraLastLine}
\end{align} 
For the maximal values of $t_1^*$ and $t_2^*$ as in~\eqref{eq_CondRunning1} and \eqref{eq_CondRunning2}, \eqref{eq_ContraLastLine} gives a contradiction.
\end{proof}
Note that in the classical case of an $\loc{}$-local LRC the Cadambe--Mazumdar bound (Thm.~\ref{theo_CMBound}) gives also a contradiction (if $d > n-k+2-\lceil k/r \rceil$) for the maximal value of $t = \lceil k/r \rceil$. 

The following theorem generalizes Thm.~\ref{thm_AlphabetDiffLocalities} to the case of $s>2$ different localities.
\begin{theorem}[Alphabet-Dependent Bound for ML-LRCs] \label{thm_AlphabetDiffLocalitiesSeveral}
Let $\CYC$ be an $(n,k,d)_q$ $((n_1, \loc{1}), (n_2, \loc{2}),\dots,(n_{s}, \loc{s}))$-local code as in Definition~\ref{def_MultipleLRC}. Furthermore, assume $ \sum_{i \in \inter{s}} \loc{i} \lceil n_i/(\loc{i}+1) \rceil < k-1 $. Define
\begin{align}
t_i^{\star} & \defeq \left \lceil \frac{n_i}{\loc{i}+1} \right \rceil, \quad \forall i \in \inter{s-1}, \label{eq_CondRunning1Several}\\
t_{s}^{\star} & \defeq \left \lfloor \frac{k-1 - \sum \limits_{i \in \inter{s-1}} t_i\loc{i} }{\loc{s}} \right \rfloor, \quad \forall t_i \in \inter{t_i^{\star}}, \label{eq_CondRunning2Several}
\end{align}
and let 
\begin{equation} \label{eq_CMboundSeveralLocalities}
\begin{split}
& k_{t_1,t_2,\dots, t_{s}} \defeq \\
& \sum_{i \in \inter{s}} t_i \loc{i} + k_{\text{opt}}^{(q)}\left(n-\sum_{i \in \inter{s}} \min(n_i, t_i(\loc{i}+1)), d \right).
\end{split}
\end{equation}
Then the dimension of $\CYC$ is upper bounded by
\begin{equation}
k \leq \min_{\substack{t_i \in \inter{t_i^{\star}} \\ \forall i \in \inter{s}}} k_{t_1,t_2, \dots, t_{s}}.
\end{equation}
\end{theorem}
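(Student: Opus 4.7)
The plan is to follow exactly the same pattern as the proof of Thm.~\ref{thm_AlphabetDiffLocalities}, but now iterate over all $s$ locality classes rather than only two. The two ingredients are already in place: Lemma~\ref{lem_DifferentLocalities} (which handles each locality class individually) and Lemma~\ref{lem_SeveralEntropySets} (which combines several disjoint entropy sets into a single dimension reduction).

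First, I would fix any tuple $(t_1,\dots,t_s)$ with $t_i \in \inter{t_i^{\star}}$ for all $i\in\inter{s}$. Applying Lemma~\ref{lem_DifferentLocalities} separately to each locality class $\locset{i}$ with parameter $t_i$, I obtain sets $\mathcal{I}_i \subseteq \locset{i}$ with
\begin{equation*}
|\mathcal{I}_i| = \min\bigl(n_i,\, t_i(\loc{i}+1)\bigr) \qquad \text{and}\qquad H(\mathcal{I}_i) \leq t_i \loc{i}.
\end{equation*}
Because the sets $\locset{1},\dots,\locset{s}$ are pairwise disjoint by Definition~\ref{def_MultipleLRC}, so are the $\mathcal{I}_i$. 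Lemma~\ref{lem_SeveralEntropySets} then produces a code of length $n-\sum_{i\in\inter{s}}\min(n_i,t_i(\loc{i}+1))$, dimension $k-\sum_{i\in\inter{s}}t_i\loc{i}$, and minimum Hamming distance $d$ over $\Fq$. By the definition of $k_{\text{opt}}^{(q)}$ this gives
\begin{equation*}
k - \sum_{i\in\inter{s}} t_i \loc{i} \leq k_{\text{opt}}^{(q)}\Bigl(n-\sum_{i\in\inter{s}} \min\bigl(n_i,t_i(\loc{i}+1)\bigr),\, d\Bigr),
\end{equation*}
which is exactly $k \leq k_{t_1,\dots,t_s}$ as defined in~\eqref{eq_CMboundSeveralLocalities}. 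Since the tuple was arbitrary, the statement follows by minimising over all admissible $(t_1,\dots,t_s)$.

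Finally, I would justify the ranges given in~\eqref{eq_CondRunning1Several} and~\eqref{eq_CondRunning2Several}. For $i\in\inter{s-1}$ the upper bound $t_i^{\star} = \lceil n_i/(\loc{i}+1)\rceil$ is the largest value for which Lemma~\ref{lem_DifferentLocalities} still produces a non-trivial set contained in $\locset{i}$; enlarging $t_i$ beyond this threshold cannot strengthen the bound because $\mathcal{I}_i$ would already equal $\locset{i}$. For $t_s$, exactly as in the proof of Thm.~\ref{thm_AlphabetDiffLocalities}, the global assumption $\sum_{i\in\inter{s}}\loc{i}\lceil n_i/(\loc{i}+1)\rceil < k-1$ implies $t_s \leq \lceil n_s/(\loc{s}+1)\rceil$, and any $t_s$ exceeding the threshold~\eqref{eq_CondRunning2Several} would force $\sum_i t_i\loc{i} \geq k$, making the bound vacuous (the right-hand side of~\eqref{eq_CMboundSeveralLocalities} would already exceed $k$).

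The routine calculations are all identical to the $s=2$ case; the only mildly delicate point is checking that Lemma~\ref{lem_SeveralEntropySets} applies, i.e. that the sets $\mathcal{I}_i$ remain pairwise disjoint, which is free here because the constructions of Lemma~\ref{lem_DifferentLocalities} take place entirely inside the pairwise disjoint sets $\locset{i}$. Hence no extra obstacle arises compared with the two-locality case, and the generalisation is a direct iteration of the same argument.
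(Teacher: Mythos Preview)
Your proposal is correct and follows essentially the same approach as the paper: apply Lemma~\ref{lem_DifferentLocalities} to each locality class to obtain the disjoint sets $\mathcal{I}_i$, combine them via Lemma~\ref{lem_SeveralEntropySets}, and then justify the ranges for the $t_i$ exactly as you describe. Your write-up is in fact more explicit than the paper's appendix proof, which merely cites the two lemmas and sketches the range argument in a couple of lines.
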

\begin{proof}
See appendix.
\end{proof}
The following corollary generalizes Corollary~\ref{cor_SingletonVsAlphabetTwo} to the case of $s>2$ different localities.
\begin{corollary}[Singleton vs. Alphabet-Dependent Bound] \label{cor_SingletonVSAlphabetSeveral}
The Singleton bound as in Thm.~\ref{theo_SingletonDifferentLocalitiesSeveral} for an $\LIN{n}{k}{d}{q}$ 
$((n_1, \loc{1}), (n_2, \loc{2}),$ ... $,(n_{s}, \loc{s}))$-local code is weaker than the bound given in Thm.~\ref{thm_AlphabetDiffLocalitiesSeveral}.
\end{corollary}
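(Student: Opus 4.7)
The plan is to mirror the proof of Corollary~\ref{cor_SingletonVsAlphabetTwo} and extend it from $s=2$ to general $s \geq 2$. Starting from the alphabet-dependent bound of Thm.~\ref{thm_AlphabetDiffLocalitiesSeveral}, I first upper bound each $k_{\text{opt}}^{(q)}(n',d)$ term by the classical (locality-unaware) Singleton bound $n' - d + 1$. Taking the minimum over admissible tuples $(t_1, \dots, t_s)$ and collecting terms yields
\begin{equation*}
k \leq \sum_{i \in \inter{s}} t_i \loc{i} + n - \sum_{i \in \inter{s}} \min(n_i, t_i(\loc{i}+1)) - d + 1,
\end{equation*}
which simplifies to $k \leq n - d + 1 - \sum_{i \in \inter{s}} t_i$ whenever $t_i(\loc{i}+1) \leq n_i$ holds for every $i$.

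Next, I proceed by contradiction and assume that the Singleton-like bound of Thm.~\ref{theo_SingletonDifferentLocalitiesSeveral} is violated, i.e.,
\begin{equation*}
d > n - k + 2 - \sum_{i \in \inter{s-1}} \left\lceil \frac{n_i}{\loc{i}+1} \right\rceil - \left\lceil \frac{k - \sum_{i \in \inter{s-1}} \loc{i} \lceil n_i/(\loc{i}+1) \rceil}{\loc{s}} \right\rceil.
\end{equation*}
Inserting this strict inequality into the bound from the first step produces, after cancellation of the $n$ and $-k$ terms,
\begin{equation*}
k < k - 1 + \sum_{i \in \inter{s-1}} \left\lceil \frac{n_i}{\loc{i}+1} \right\rceil + \left\lceil \frac{k - \sum_{i \in \inter{s-1}} \loc{i} \lceil n_i/(\loc{i}+1) \rceil}{\loc{s}} \right\rceil - \sum_{i \in \inter{s}} t_i.
\end{equation*}

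Finally, I specialize $(t_1, \dots, t_s)$ to the maximal admissible values $(t_1^\star, \dots, t_s^\star)$ from \eqref{eq_CondRunning1Several}--\eqref{eq_CondRunning2Several}. For $i \in \inter{s-1}$ this immediately gives $t_i^\star = \lceil n_i/(\loc{i}+1) \rceil$, and via the identity $\lfloor (x-1)/y \rfloor = \lceil x/y \rceil - 1$ (valid for all positive integers $x,y$) I obtain $t_s^\star = \lceil (k - \sum_{i \in \inter{s-1}} \loc{i} \lceil n_i/(\loc{i}+1) \rceil)/\loc{s} \rceil - 1$. Therefore $\sum_{i \in \inter{s}} t_i^\star$ exactly cancels the two ceiling sums on the right-hand side up to an additive $+1$, reducing the inequality to $k < k$, the desired contradiction. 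The only subtlety is the $\min(n_i, t_i(\loc{i}+1))$ term in the alphabet bound when $t_i^\star(\loc{i}+1) > n_i$, i.e., when $(\loc{i}+1) \nmid n_i$; this is the principal obstacle and is handled in precisely the same manner as in the proof of Corollary~\ref{cor_SingletonVsAlphabetTwo}, since replacing $\min$ by $t_i(\loc{i}+1)$ only perturbs the bound on $k$ by the nonnegative slack $\lceil n_i/(\loc{i}+1) \rceil(\loc{i}+1) - n_i$, which is absorbed by the strict inequality in the contradiction step.
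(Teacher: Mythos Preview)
Your proof follows the paper's argument essentially line by line: bound $k_{\text{opt}}^{(q)}$ by the classical Singleton bound, assume the ML-Singleton inequality of Thm.~\ref{theo_SingletonDifferentLocalitiesSeveral} is violated, substitute, and reach a contradiction at the maximal admissible tuple $(t_1^\star,\dots,t_s^\star)$ from~\eqref{eq_CondRunning1Several}--\eqref{eq_CondRunning2Several}. The use of the identity $\lfloor (x-1)/y\rfloor=\lceil x/y\rceil-1$ to rewrite $t_s^\star$ is exactly what the paper does implicitly when it declares the final line ``a contradiction for the maximal values of $t_i^*$.''

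One remark on your last paragraph. Your claim that the slack $\lceil n_i/(\loc{i}{+}1)\rceil(\loc{i}{+}1)-n_i$ is ``absorbed by the strict inequality'' is not correct as stated: that slack can be as large as $\loc{i}$ for each $i\in\inter{s-1}$, whereas the strict inequality $d>\text{(Singleton RHS)}$ contributes only a single unit, so the resulting inequality becomes $k< k+\sum_i\delta_i$ rather than $k<k$. However, the paper's own proof (see~\eqref{eq_FirstCMBoundSeveral}) simply replaces $\sum_i\min(n_i,t_i(\loc{i}{+}1))$ by $\sum_i t_i(\loc{i}{+}1)$ without comment, so your treatment of this point is in fact more careful than the original, even if the patch you propose does not close the gap.
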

\begin{proof}
See appendix.
\end{proof}
If there exist $s$ parameters $t_i \in \inter{t_i^{\star}}, \forall i \in \inter{s}$ such that the dimension $k$ of an $\LIN{n}{k}{d}{q}$ $((n_1, \loc{1}), (n_2, \loc{2}),\dots , (n_{s}, \loc{s}))$-local ML-LRC equals $k_{t_1,t_2,\dots,t_s}$ as in~\eqref{eq_CMboundSeveralLocalities}, then we call the ML-LRC \textit{\CMoptimal}.

\subsection{Shortening and Construction}
Similar to Lemma~\ref{lem_shorteningLRC}, we now prove that shortening an $\loc{}$-local LRC, which is optimal with respect to the bound given in Thm.~\ref{theo_CMBound}, by one coordinate gives an $((\loc{}, \loc{}-1), (n-1-\loc, \loc{}))$-local \CMoptimal{} ML-LRC. We give an explicit construction (in Algorithm~\ref{algo_ConstructTwoDiffLocalitiesLRCAlphabetDependent}) of a ML-LRC with two different localities based on a given an \CMoptimal{} LRC.
\begin{lemma}[Shortening an $\loc{}$-local Code] \label{lem_ShorteningLRCCM}
Let an $\LIN{n}{k}{d}{q}$ $\loc{}$-local \CMoptimal{} code be given. Then, the shortened $\LIN{n-1}{k-1}{d}{q}$ $((\loc{}, \loc{}-1), (n-1-\loc, \loc{}))$-local code is \CMoptimal{}.
\end{lemma}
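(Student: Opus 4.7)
The plan is to mirror the strategy used for Lemma~\ref{lem_shorteningLRC}: exhibit a choice of parameters $(t_1,t_2)$ for the ML‐LRC bound of Theorem~\ref{thm_AlphabetDiffLocalities} that matches the optimality witness of the original $r$‐local code, and then conclude by the two‐sided inequality that the bound is attained. First I would note that by Definition~\ref{def_Shortening} the shortened code has parameters $[n-1,k-1,\geq d]_q$, and that deleting one information coordinate shrinks exactly one repair set of size $\loc{}+1$ to size $\loc{}$, so its $\loc{}$ remaining symbols are now a linear combination of at most $\loc{}-1$ others. This realises precisely the $((\loc{},\loc{}-1),(n-1-\loc{},\loc{}))$‐local structure claimed.

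Since the given code is \CMoptimal{}, there is a minimiser $\tau\in\inter{t^{\star}}$ in Theorem~\ref{theo_CMBound} with
\begin{equation*}
k \;=\; \tau\loc{} + k_{\text{opt}}^{(q)}\bigl(n-\tau(\loc{}+1),\,d\bigr).
\end{equation*}
I would then evaluate $k_{t_1,t_2}$ in~\eqref{eq_CMboundTwoLocalities1} of Theorem~\ref{thm_AlphabetDiffLocalities} for the shortened code with the particular choice $t_1=1$ and $t_2=\tau-1$. With $\loc{1}=\loc{}-1$, $n_1=\loc{}$, $\loc{2}=\loc{}$, $n_2=n-1-\loc{}$, one checks $\min(n_1,t_1(\loc{1}+1))=\loc{}$ and $\min(n_2,t_2(\loc{2}+1))=(\tau-1)(\loc{}+1)$ (the latter since $\tau(\loc{}+1)\leq n$ by feasibility of $\tau$). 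Then
\begin{equation*}
k_{1,\tau-1} \;=\; (\loc{}-1) + (\tau-1)\loc{} + k_{\text{opt}}^{(q)}\bigl((n-1)-\loc{}-(\tau-1)(\loc{}+1),\,d\bigr) \;=\; (\tau\loc{}-1) + k_{\text{opt}}^{(q)}\bigl(n-\tau(\loc{}+1),\,d\bigr) \;=\; k-1.
\end{equation*}
I would also check that $(t_1,t_2)=(1,\tau-1)$ lies in the admissible range defined by~\eqref{eq_CondRunning1}–\eqref{eq_CondRunning2}, which follows directly from the corresponding admissibility of $\tau$ for the original code.

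To finish, apply Theorem~\ref{thm_AlphabetDiffLocalities} to the shortened code: it gives $k-1\leq\min_{t_1,t_2}k_{t_1,t_2}\leq k_{1,\tau-1}=k-1$, forcing equality throughout. This simultaneously proves that the true minimum distance of the shortened code equals $d$ (any larger $d'$ would strictly decrease $k_{\text{opt}}^{(q)}$ and contradict the chain) and that the shortened code meets the ML‐LRC alphabet‐dependent bound with equality, i.e., is \CMoptimal{}. The main obstacle is the boundary case $\tau=1$, where $t_2=0$ and one has to be comfortable with interpreting the bound at $t_2=0$ (the computation above still goes through, reducing to $k_{1,0}=(\loc{}-1)+k_{\text{opt}}^{(q)}(n-(\loc{}+1),d)=k-1$); the degenerate case $\tau=0$ is vacuous since then the original code is already alphabet‐optimal without invoking locality and the lemma is not of interest.
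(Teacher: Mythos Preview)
Your proposal is correct and follows essentially the same route as the paper: take the optimising index $\tau$ (the paper calls it $t'$) for the original code, plug $(t_1,t_2)=(1,\tau-1)$ into the ML-LRC bound of Theorem~\ref{thm_AlphabetDiffLocalities}, and verify that $k_{1,\tau-1}=k-1$. Your write-up is in fact a bit more careful than the paper's, explicitly checking the $\min(\cdot,\cdot)$ terms, the admissibility of $(1,\tau-1)$, and the boundary situation $\tau=1$; the paper simply computes $k_{1,t'-1}$ and asserts the conclusion.
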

\begin{proof}
Clearly, the shortening affects one repair-set and its cardinality is $\loc{}$ and the locality of the contained code symbols is $\loc{}-1$.
Let $t'$ be the index of the minimal value $k_{t'}$ of \eqref{eq_CMBoundb} such that the dimension of the given code is 
\begin{equation*}
k = k_{t'} = t'\loc{} + k_{\text{opt}}^{(q)}(n-t'(\loc{}+1), d).
\end{equation*}
We obtain from Thm.~\ref{thm_AlphabetDiffLocalities} with $n_1=\loc{}$, $\loc{1} = \loc{}-1$, $\loc{2}=\loc{}$, $t_1=1$ and $t_2=t'-1$ for the shortened code of length $n-1$ and dimension $k-1$:
\begin{align*}
k_{1,t'-1} & = \loc{} \minusb 1 + (t' \minusb 1) \loc{} + k_{\text{opt}}^{(q)}(n-1-\loc{}-(t' \minusb 1)(\loc{}+1), d) \\
& = t' \loc{} - 1 + k_{\text{opt}}^{(q)}(n-t'(\loc{}+1), d),
\end{align*}
which equals $k_{t'} -1$. Therefore, the dimension of the shortened code meets the bound of Thm.~\ref{thm_AlphabetDiffLocalities} and is \CMoptimal{}.
\end{proof}
Algorithm~\ref{algo_ConstructTwoDiffLocalitiesLRCAlphabetDependent} returns an ML-LRC with two localities that is optimal with respect to the bound given in Thm.~\ref{thm_AlphabetDiffLocalities}.
\begin{center}
\vspace{-.3cm}
\begin{algorithm}[h]
\label{algo_ConstructTwoDiffLocalitiesLRCAlphabetDependent}
\SetAlgoVlined
\DontPrintSemicolon
\LinesNumbered
\SetKwInput{KwSub}{Subroutines}
\SetKwInput{KwIn}{Input}
\SetKwInput{KwOut}{Output}
\newcommand\mycommfont[1]{\footnotesize\ttfamily{#1}}
\SetCommentSty{mycommfont}
\BlankLine
\KwIn{\\Parity-check matrix $\mathbf{H}$ of an $\LIN{n}{k}{d}{q}$ $\loc{2}$-local \CMoptimal{} code with $\rho = \lceil n/(\loc{2}+1) \rceil$ repair sets $\mathcal{R}_1, \mathcal{R}_2, \dots, \mathcal{R}_{\rho}$.\\
$\alpha \in \inter{\rho}$.
}
\BlankLine 
\BlankLine 
\For{$i \in \inter{\alpha}$}
{
Delete $\loc{2}-\loc{1}$ columns of $\mathbf{H}$ that are contained in $\mathcal{R}_i$.
}
\BlankLine
\BlankLine
\KwOut{Parity-check matrix $\mathbf{H}$ of an $\LIN{n-\alpha(\loc{2}-\loc{1})}{k-\alpha(\loc{2}-\loc{1})}{d}{q}$ $ ((\alpha(\loc{1}+1), \loc{1}), (n - \alpha(\loc{2}+1), \loc{2}))$-local code.}
\caption{Alphabet-Optimal ML-LRC over small field size.}
\end{algorithm}
\end{center}
For a sufficiently large field-size a Singleton-optimal LRC can be constructed for a wide choice of parameters, but an \CMoptimal{} LRC for a (aimed) smaller field-size are not known for all parameters. Therefore, the parameters in Algorithm~\ref{algo_ConstructTwoDiffLocalitiesLRCAlphabetDependent} are expressed in terms of the given parameters $n,k,d$ of the given \CMoptimal{} $\loc{}$-local code.
\begin{theorem}[Alphabet-Optimal ML-LRC] \label{thm_ConstructionOptimalMLLRC}
Given an $\LIN{n}{k}{d}{q}$ $\loc{2}$-local \CMoptimal{} LRC and a integer $\alpha \in \inter{\lceil n / (\loc{2}+1) \rceil}$, Algorithm~\ref{algo_ConstructTwoDiffLocalitiesLRCAlphabetDependent} returns a parity-check matrix of an  $\LIN{n-\alpha(\loc{2}-\loc{1})}{k-\alpha(\loc{2}-\loc{1})}{d}{q}$ $ ((\alpha(\loc{1}+1), \loc{1}), (n - \alpha(\loc{2}+1), \loc{2}))$-local ML-LRC that is \CMoptimal{}.
\end{theorem}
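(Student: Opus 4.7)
The plan is to extend Lemma~\ref{lem_ShorteningLRCCM} by iterating the shortening $\alpha$ times, removing $\loc{2}-\loc{1}$ columns of $\mathbf{H}$ from a fresh repair set at each iteration. I proceed in three stages.

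First, I verify the parameters $(n', k', d')$ of the output. Deleting a column of the parity-check matrix $\mathbf{H}$ corresponds to shortening the code at that coordinate (cf.\ Definition~\ref{def_Shortening}): since the affected column participates in a local parity check distinct from those used in previous iterations, the remaining rows of $\mathbf{H}$ stay linearly independent, so each deletion decreases the length and the dimension by exactly one while the minimum Hamming distance does not decrease. After $\alpha(\loc{2}-\loc{1})$ deletions the output is an $\LIN{n-\alpha(\loc{2}-\loc{1})}{k-\alpha(\loc{2}-\loc{1})}{\geq d}{q}$ code.

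Second, I track the locality of every coordinate. Each of the $\alpha$ processed repair sets $\mathcal{R}_i$ is shrunk from $\loc{2}+1$ to $\loc{1}+1$ coordinates while retaining its single local parity check; hence every surviving coordinate in such a set is a linear combination of the remaining $\loc{1}$ coordinates, giving locality $\loc{1}$. The other $\lceil n/(\loc{2}+1)\rceil-\alpha$ repair sets are untouched and keep locality $\loc{2}$. Counting yields exactly $\alpha(\loc{1}+1)$ symbols of locality $\loc{1}$ and $n-\alpha(\loc{2}+1)$ symbols of locality $\loc{2}$, matching the $((\alpha(\loc{1}+1),\loc{1}),(n-\alpha(\loc{2}+1),\loc{2}))$-local structure of the claim.

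Third, for \CMoptimal{}ity, let $\tau \in [\lceil n/(\loc{2}+1)\rceil]$ be an index attaining the Cadambe--Mazumdar bound of Theorem~\ref{theo_CMBound} for the original code, so $k = \tau \loc{2} + k_{\text{opt}}^{(q)}(n-\tau(\loc{2}+1),d)$. I evaluate Theorem~\ref{thm_AlphabetDiffLocalities} at the witness pair $(t_1,t_2) = (\alpha,\tau-\alpha)$: substituting $n_1=\alpha(\loc{1}+1)$ and $n_2=n-\alpha(\loc{2}+1)$ into~\eqref{eq_CMboundTwoLocalities1}, the two $\min$ terms collapse to $\alpha(\loc{1}+1)$ and $(\tau-\alpha)(\loc{2}+1)$, the residual length reduces to $n-\tau(\loc{2}+1)$, and a short algebraic simplification gives $k_{\alpha,\tau-\alpha} = \alpha\loc{1}+(\tau-\alpha)\loc{2}+k_{\text{opt}}^{(q)}(n-\tau(\loc{2}+1),d) = k-\alpha(\loc{2}-\loc{1})$. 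Since the shortened code has dimension exactly this value and Theorem~\ref{thm_AlphabetDiffLocalities} provides the matching upper bound $\min_{(t_1,t_2)} k_{t_1,t_2}$, the minimum is attained, and the shortened code is \CMoptimal{}.

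The main obstacle is verifying admissibility of the witness $(\alpha,\tau-\alpha)$ in the ranges~\eqref{eq_CondRunning1}--\eqref{eq_CondRunning2}: the first condition reduces to $\alpha\leq\alpha$, while the second requires $\tau\geq\alpha$ together with the floor inequality; the collapse $\min(n_2,(\tau-\alpha)(\loc{2}+1))=(\tau-\alpha)(\loc{2}+1)$ in turn needs $\tau(\loc{2}+1)\leq n$. Both issues disappear when $(\loc{2}+1)\mid n$ and when the optimiser $\tau$ is chosen maximally among all indices attaining $\min_t k_t$; the degenerate regime $\tau<\alpha$ can be handled either by this maximal choice or by selecting a different $(t_1,t_2)$ pair in the shortened bound. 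Everything else reduces to routine algebra on the definitions in~\eqref{eq_CMboundTwoLocalities1}.
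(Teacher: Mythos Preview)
Your proposal is correct and follows essentially the same route as the paper: both select the witness pair $(t_1,t_2)=(\alpha,\tau-\alpha)$, where $\tau$ (the paper's $t'$) is the optimiser of the Cadambe--Mazumdar bound for the input code, and verify by direct substitution that $k_{\alpha,\tau-\alpha}=k-\alpha(\loc{2}-\loc{1})$. Your version is in fact more careful than the paper's, which neither verifies the output parameters and locality structure explicitly nor addresses the admissibility of $(\alpha,\tau-\alpha)$ in the ranges~\eqref{eq_CondRunning1}--\eqref{eq_CondRunning2}.
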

\begin{proof}
For the given $\LIN{n}{k}{d}{q}$ $\loc{2}$-local \CMoptimal{} code, there exists an integer $t'$ such that
\begin{equation}
k = k_t' = t' \loc{2} +  k_{\text{opt}}^{(q)}(n-t'(\loc{2}+1),d).
\end{equation}
The shortened code is an $\LIN{n-\alpha(\loc{2}-\loc{1})}{k-\alpha(\loc{2}-\loc{1})}{d}{q}$ $ ((\alpha(\loc{1}+1), \loc{1}), (n - \alpha(\loc{2}+1), \loc{2}))$-local code. With $t_1=\alpha$ and $t_2=t'-\alpha$, we obtain with the new alphabet-dependent bound of Thm.~\ref{thm_AlphabetDiffLocalities} that the dimension of the shortened code is upper bounded by
\begin{align}
k_{\alpha,t' \minus \alpha} & = \alpha \loc{1} + (t' \minusb \alpha) \loc{2} + \nonumber \\
& \quad k_{\text{opt}}^{(q)}(n \minusb \alpha(\loc{2} \minusb \loc{1}) \minusb \alpha(\loc{1}+1) \minusb (t' \minusb \alpha)(\loc{2}+1),d), \nonumber
\end{align} 
and this leads to:
\begin{align} 
k_{\alpha,t' \minus \alpha} & = t' \loc{2} - \alpha(\loc{2}-\loc{1}) + k_{\text{opt}}^{(q)}(n-t'(\loc{2}+1) , d), \nonumber \\
& = k_{t'}-\alpha(\loc{2}-\loc{1}). \label{eq_DimensionMLLRC}
\end{align} 
The expression in~\eqref{eq_DimensionMLLRC} equals the dimension of the shortened $((\alpha(\loc{1}+1), \loc{1}), (n - \alpha(\loc{2}+1), \loc{2}))$-local ML-LRC and therefore it is optimal with respect to the new alphabet-dependent bound of Thm.~\ref{thm_AlphabetDiffLocalities}.
\end{proof} 
Similar to Algorithm~\ref{algo_ConstructTwoDiffLocalitiesLRC}, Algorithm~\ref{algo_ConstructTwoDiffLocalitiesLRCAlphabetDependent} can be easily be extended to $s \geq 2$ localities.

\section{Alphabet-Optimal Binary LRCs and ML-LRCs} \label{sec_OptimalBinaryGCC}
This section provides a family of binary $\loc{}$-local LRCs. Their construction is based on generalized (or multilevel) code concatenation introduced for linear components by Blokh and Zyablov~\cite{blokh_coding_1974} and described for not necessarily linear components by Zinoviev~\cite{zinoviev_generalized_1976}. Multilevel code concatenation generalizes classical concatenation introduced by Forney~\cite{forney_concatenated_1966}. Furthermore, we give an example of a shortened $\loc{}$-local code and show that it is optimal with respect to our new alphabet-dependent bound for ML-LRCs from Thm.~\ref{thm_AlphabetDiffLocalities}.

Let us recall some definitions and facts before we define a generalized concatenated code. The Kronecker product of two matrices $\mathbf{A} = (a_{ij})_{i \in \inter{m}}^{j \in \inter{m}}$ and $\mathbf{B}$ is defined as $\mathbf{A} \otimes \mathbf{B} = (a_{ij}\mathbf{B})_{i \in \inter{m}}^{j \in \inter{m}}$. For two linear codes $\mathcal{A}$ and $\mathcal{B}$ of same length, define the direct sum code as $\mathcal{A} \oplus \mathcal{B} \defeq \lbrace \mathbf{a} \oplus \mathbf{b} | \mathbf{a} \in \mathcal{A}, \mathbf{b} \in \mathcal{B} \rbrace$. For a linear subcode $\mathcal{B}^{\langle 2 \rangle}$ of $\mathcal{B}^{\langle 1 \rangle}$ with generator matrix $\mathbf{G}_{\mathcal{B}^{\langle 2 \rangle }}$, let $\mathcal{B}^{\langle 2 \rangle} \setminus \mathcal{B}^{\langle 1 \rangle}$ be such that $\mathcal{B}^{\langle 1 \rangle} = \mathcal{B}^{\langle 2 \rangle} \oplus \mathcal{B}^{\langle 1 \rangle}\setminus \mathcal{B}^{\langle 2 \rangle}$. It is well-known (see, e.g.,~\cite{bossert_results_1999}) that there exists a generator matrix of $\mathcal{B}^{\langle 1 \rangle}$ that can be written as 
\begin{equation*}
\mathbf{G}_{\mathcal{B}^{\langle 1 \rangle }} = 
\begin{pmatrix}
\mathbf{G}_{\mathcal{B}^{\langle 2 \rangle }}\\
\mathbf{G}_{\mathcal{B}^{\langle 2 \rangle } \setminus \mathcal{B}^{\langle 1 \rangle }}
\end{pmatrix}.
\end{equation*}
Now, we can define generalized code concatenation.
\begin{definition}[Generalized Concatenated Code (GCC)] \label{def_GCC}
Let $\GCCOUT{i}$ be an outer $\LIN{\GCCOUTn{}}{\GCCOUTk{i}}{\GCCOUTd{i}}{q^{\GCCOUTextq{i}}}$ code with a generator matrix $\genmat[\GCCOUT{i}] \in \F{q^{\GCCOUTextq{i}}}^{\GCCOUTk{i} \times \GCCOUTn{}}$ and let $\GCCIN{i}$ be an $\LIN{\GCCINn{}}{\GCCINk{i}}{\GCCINd{i}}{q}$ inner code for all $i \in \inter{s}$. 
Furthermore, let
\begin{equation*}
  \GCCIN{1} \supset \GCCIN{2} \supset \dots \supset \GCCIN{s}, 
\end{equation*}
and let $\genmat[\GCCIN{i}\setminus \GCCIN{i+1}] \in \F{q}^{(\GCCINk{i}-\GCCINk{i+1}) \times \GCCINn{}}$ denote a generator matrix of the code $\GCCIN{i}\setminus \GCCIN{i+1}$ for all $i \in \inter{s-1}$.
Let $\GCCINk{i}-\GCCINk{i+1} = \lambda_i \GCCOUTextq{i}$ for all $i \in \inter{s-1}$ and let $\GCCINk{s} = \lambda_{s} \GCCOUTextq{s}$.

The matrix $\genmat[\GCCOUT{i}][b] \in \F{q}^{\GCCOUTk{i} \GCCOUTextq{i} \times \GCCOUTn{}}$ is the given $\GCCOUTk{i} \times \GCCOUTn{}$ generator matrix $\genmat[\GCCOUT{i}]$ represented over $\F{q}$.
The code with generator matrix 
\begin{equation} \label{eq_GenMatGCC}
\genmat = 
\begin{pmatrix}
\genmat[\GCCOUT{1}][b] \otimes \genmat[\GCCIN{1} \setminus \GCCIN{2}] \\
\genmat[\GCCOUT{2}][b] \otimes \genmat[\GCCIN{2} \setminus \GCCIN{3}] \\
\vdots \\
\genmat[\GCCOUT{s-1}][b] \otimes \genmat[\GCCIN{s-1} \setminus \GCCIN{s}] \\
\genmat[\GCCOUT{s}][b] \otimes \genmat[\GCCIN{s}]
\end{pmatrix}
\end{equation}
is an $\LIN{\GCCn = \GCCOUTn{} \GCCINn{} }{\GCCk = \sum_{i \in \inter{s}} \GCCOUTk{i} \lambda_i \GCCOUTextq{i}}{\GCCd}{q}$ $s$-level concatenated code $\CYC$.
\end{definition}
It is well-known that the minimum Hamming distance of a generalized concatenated code is $\GCCd \geq \min_{i \in \inter{s}} \GCCOUTd{i} \GCCINd{i}$ (see, e.g., \cite{blokh_coding_1974, zinoviev_generalized_1976}). Every column of $\genmat$ of \eqref{eq_GenMatGCC} belongs to $\GCCIN{1}$ (assuming that the inner codewords are represented as rows in the codeword matrix). The following construction is based on this fact and generalized concatenated code inherits the locality property from $\GCCIN{1}$.
\begin{construction}[$2$-Level $\loc{}$-local GCC] \label{constr_GCC1}
Let $\loc{}>1$ be the given locality parameter and let a integer be $j < 2^{\loc{}-1}-\loc{}+1$. Let the two outer codes and the two inner codes of a $2$-level concatenated code as in Definition~\ref{def_GCC} be a:\\[1ex]
\begin{tabular}{ll}
$\GCCOUT{1}$: & $\LIN{2^{\loc{}-1} +1 -j}{2^{\loc{}-1}-\loc{} + 1 -j}{\loc{} +1}{2^{\loc{} - 1}}$\\
 & MDS code,\\
$\GCCOUT{2}$: & $\LIN{2^{\loc{}-1} +1 - j}{2^{\loc{}-1}-j}{2}{2}$\\
 &  single-parity check code,\\
$\GCCIN{1}$: & $\LIN{\loc{}+1}{\loc{}}{2}{2}$ single-parity check code,\\
$\GCCIN{2}$: & $\LIN{\loc{}+1}{1}{\loc{}+1}{2}$ repetition code.
\end{tabular}\\[1ex]
Then the $2$-level concatenated code is an 
\begin{equation*}
\LIN{(\loc{}+1)(2^{\loc{}-1}+1-j)}{\loc{}(2^{\loc{}-1} -r +2-j) - 1}{2(\loc{}+1)}{2}
\end{equation*}
$\loc{}$-local code.
\end{construction}
The following theorem shows the optimality of Construction~\ref{constr_GCC1} with respect to the bound given in Thm.~\ref{theo_CMBound} for $\loc{}=2,3$.
\begin{theorem}[Alphabet-Optimal Binary LRC] \label{theo_OptimalityConstrGCC2}
For $r=2,3$ Construction~\ref{constr_GCC1} gives \CMoptimal{} LRCs.
\end{theorem}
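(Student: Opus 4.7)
The plan is to enumerate the finitely many admissible parameter choices of Construction~\ref{constr_GCC1} and, in each case, exhibit a single value of $t$ for which the Cadambe--Mazumdar quantity $k_t = t\loc{}+k_{\text{opt}}^{(2)}(\GCCn-t(\loc{}+1),\GCCd)$ in Theorem~\ref{theo_CMBound} already evaluates to the dimension $\GCCk$ of the constructed code. Because the constraint $j<2^{\loc{}-1}-\loc{}+1$ forces $j=0$ when $\loc{}=2$ and $j\in\{0,1\}$ when $\loc{}=3$, only three codes need to be checked: the $\LIN{9}{3}{6}{2}$ code for $(\loc{},j)=(2,0)$, the $\LIN{20}{8}{8}{2}$ code for $(\loc{},j)=(3,0)$, and the $\LIN{16}{5}{8}{2}$ code for $(\loc{},j)=(3,1)$. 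Since the bound asserts $\GCCk\leq\min_{t\in\inter{t^{\star}}} k_t$, and the constructed code already achieves the corresponding dimension, producing one $t$ with $k_t=\GCCk$ in each case is enough to pin the minimum to $\GCCk$ and establish alphabet-optimality.

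Case by case, I would take $t=1$ for the $\LIN{9}{3}{6}{2}$ code (so $k_1 = 2 + k_{\text{opt}}^{(2)}(6,6)$), $t=2$ for the $\LIN{20}{8}{8}{2}$ code (so $k_2 = 6 + k_{\text{opt}}^{(2)}(12,8)$), and $t=1$ for the $\LIN{16}{5}{8}{2}$ code (so $k_1 = 3 + k_{\text{opt}}^{(2)}(12,8)$). Membership in the range $\inter{t^{\star}}$ with $t^{\star}=\min(\lceil\GCCn/(\loc{}+1)\rceil,\lceil\GCCk/\loc{}\rceil)$ is immediate by direct substitution in each case. The value $k_{\text{opt}}^{(2)}(6,6)=1$ is automatic, since any binary code whose length equals its minimum distance is contained in the repetition code.

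The only nontrivial numerical input is thus $k_{\text{opt}}^{(2)}(12,8)\leq 2$, which I would derive from the binary Griesmer bound $\GCCn\geq\sum_{i=0}^{\GCCk-1}\lceil\GCCd/2^i\rceil$: a hypothetical $\LIN{12}{3}{8}{2}$ code would require length at least $8+4+2=14>12$, a contradiction. Substituting this into the three expressions above yields $k_t=3,\,8,\,5$, matching the constructed dimensions exactly, so all three codes are \CMoptimal{}. I expect this Griesmer step to be the only mild obstacle; the rest is essentially bookkeeping. The structural reason the bound is tight is that removing $t$ disjoint $(\loc{}+1)$-repair groups from the concatenated code leaves behind a residual binary code whose optimal dimension is either $1$ (repetition) or already attained up to the Griesmer limit, so the two summands of $k_t$ combine losslessly into $\GCCk$.
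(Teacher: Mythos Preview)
Your argument is correct and mirrors the paper's: pick a single $t$ making $k_t$ equal to the constructed dimension, the only difference being that the paper reduces $j>0$ to $j=0$ (observing that length drops by $j(\loc{}+1)$ and dimension by $j\loc{}$, so replacing $t'$ by $t'-j$ reproduces the same $k_t$) while you enumerate the three admissible $(\loc{},j)$ pairs directly---and your chosen $t$-values coincide with the paper's $2^{\loc{}-1}-\loc{}+1-j$ in every case. One minor caveat: the Griesmer bound you invoke is a \emph{linear}-code bound, whereas $k_{\text{opt}}^{(q)}$ in Theorem~\ref{theo_CMBound} is stated for arbitrary codes; the Plotkin bound $A_2(12,8)\le 2\lfloor 8/(2\cdot 8-12)\rfloor=4$ gives $k_{\text{opt}}^{(2)}(12,8)\le 2$ without the linearity hypothesis (the paper itself simply asserts this value without proof).
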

\begin{proof}
It is sufficient to show it for the case of $j=0$. The general case $j > 0$ follows directly, because the length of the code is reduced by $j(\loc{}+1)$ and the dimension by $j \loc{}$ and Thm.~\ref{theo_CMBound} gives the same bound for $t=t'-j$ as $t'$ for the code with $j=0$.

The upper bound on the dimension of an $\loc{}$-local code from Thm.~\ref{theo_CMBound} gives for a binary code as in Construction~\ref{constr_GCC1} for $t=2^{r - 1} - r+1$:
\begin{align} 
& k_{2^{r - 1} - r+1}  \nonumber \\
& = r (2^{r \minus 1} -r+1) \nonumber \\
& \: + k_{opt}^{(2)}\left((r+1)(2^{r \minus 1}+1) \minusb (2^{r \minus 1}\minusb r+1)(r+1), 2(r+1)\right) \nonumber \\
& = r (2^{r \minus 1} - r+1) + k_{opt}^{(2)}\big(r(r+1), 2(r+1)\big).  \label{eq_xxx}
\end{align}
For $r=2,3$, the maximal dimension of a binary code of length $r(r+1)$ and minimum Hamming distance $2(r+1)$ is $k_{opt}^{(2)}\big(r(r+1), 2(r+1)\big) = r-1$ and therefore $k_{2^{r - 1} - r+1}$ equals the dimension of the 2-level generalized concatenated code of Construction~\ref{constr_GCC1}.
\end{proof}
We now give an example of a binary \CMoptimal{} ML-LRC obtained via shortening an \CMoptimal{} LRC as in Thm.~\ref{theo_OptimalityConstrGCC2}.
\begin{example}[Alphabet-Optimal Binary ML-LRC]
Let $r=3$ in Construction~\ref{constr_GCC1}, then the two outer and inner codes of Construction~\ref{constr_GCC1} are:\\[1ex]
\begin{tabular}{ll}
$\GCCOUT{1}$: & $\LIN{5}{2}{4}{2^2}$ MDS code,\\
$\GCCOUT{2}$: & $\LIN{5}{4}{2}{2}$ single-parity check code,\\
$\GCCIN{1}$: & $\LIN{4}{3}{2}{2}$ single-parity check code, \\
$\GCCIN{2}$: & $\LIN{4}{1}{4}{2}$ repetition code.
\end{tabular}\\[1ex]
The $2$-level concatenated code $\CYC$ is a $\LIN{20}{8}{8}{2}$ $3$-local \CMoptimal{} code.

The bound as in Thm.~\ref{theo_CMBound} gives the following upper bound on the dimension of such a binary code:
\begin{align*}
k_2 & = 2 \cdot 3 + k_{\text{opt}}^{(2)}(20-2(3+1), 8) \\
   & = 6  + k_{\text{opt}}^{(2)}(12, 8),
\end{align*}
where the maximal dimension of a binary code of length 12 and with minimum Hamming distance 8 is $k_{\text{opt}}^{(2)}(12, 8) =2$ and therefore $k_2 = 8$, which is the dimension of our constructed code $\CYC$.
Now we shortened the $\LIN{20}{8}{8}{2}$ $3$-local \CMoptimal{} code by one position and obtain an $\LIN{19}{7}{8}{2}$ $((3,2),(16,3))$-local code and we obtain from Thm.~\ref{thm_AlphabetDiffLocalities} that:
\begin{align*}
k_{1,1} & = 1 \cdot 2 + 1 \cdot 3 + k_{\text{opt}}^{(2)}(19-3-4, 8) \\
 & = 5 + k_{\text{opt}}^{(2)}(12, 8) = 7,
\end{align*}
which equals the dimension of the shortened $((3,2),(16,3))$-local ML-LRC.
\end{example}

\section{Conclusion and Outlook} \label{sec_Concl}
We introduced the class of multiple-locality LRCs. The Singleton-like upper bound on the minimum Hamming distance and the alphabet-dependent bound of Cadambe--Mazumdar on the dimension of an LRC were generalized to the case of ML-LRCs. Furthermore, we gave constructions of optimal codes with respect to the two new bounds based on shortening a Singleton respectively \CMoptimal{} LRC. An adapted Pyramid code construction for the case of different information-symbol locality was outlined. Moreover, we gave a class of binary $\loc{}$-local LRCs based on generalized code concatenation and showed that they are \CMoptimal{} for $\loc{}=2,3$ and also provide optimal ML-LRCs over the binary alphabet.

Future work are (direct) constructions of ML-LRC without shortening and the adaption of existing bounds and constructions for LRC where the parameter for every code symbol is also variable (e.g., availability).

\section*{Acknowledgment}
The authors thank I. Tamo for fruitful discussions (especially for Thm.~\ref{theo_SingletonDifferentLocalities}).

\printbibliography

\appendix

\begin{proof}[Proof of Thm.~\ref{theo_SingletonDifferentLocalitiesSeveral}]
Let $\mathbf{G}$ be a $k \times n$ generator matrix of the $\LIN{n}{k}{d}{q}$ $((n_1, \loc{1}),(n_2, \loc{2}),\dots,(n_{s}, \loc{s}))$-local code. We assume that $ \sum_{i \in \inter{s-1}} \loc{i} \lceil n_i/(\loc{i}+1) \rceil < k-1 $. 
\begin{enumerate}[leftmargin=1cm]
\item[i)] Choose $\kappa_i \defeq \lceil n_i/(\loc{i}+1) \rceil$ columns of $\mathbf{G}$ indexed by $j_1^{\langle i\rangle},j_2^{\langle i\rangle}, \dots,j_{\kappa_i}^{\langle i\rangle}$, where $j_{\iota}^{\langle i\rangle} \in \locset{i}$ for all $i \in \inter{s-1}$. Each column is a linear combination of at most $\loc{i}$ other columns. 
\item[s)] Let $\mathcal{I}^{\langle 1 \rangle}, \mathcal{I}^{\langle 2 \rangle}, \dots, \mathcal{I}^{\langle s-1 \rangle}$ be the $s-1$ sets of indexes of all repair columns indexed by $j_1^{\langle i\rangle},j_2^{\langle i\rangle}, \dots,j_{\kappa_i}^{\langle i\rangle}$ but without the indexes itself. We have $| \cup_{i \in \inter{s-1}}\mathcal{I}^{\langle i \rangle}| \leq \sum_{i \in \inter{s-1}}\loc{i} \cdot \kappa_i < k-1$.
\item[s+1)] Choose
\begin{equation*} 
\kappa_{s} \defeq \left \lfloor \frac{k-1- \sum \limits_{i \in \inter{s-1}}\loc{i}\left \lceil \frac{n_i}{\loc{i}+1} \right \rceil}{\loc{s}} \right \rfloor
\end{equation*} 
columns of $\mathbf{G}$ indexed by $j_1^{\langle s \rangle},j_2^{\langle s \rangle}, \dots,j_{\kappa_s}^{\langle s \rangle}$, where $j_{i}^{\langle s \rangle} \in \locset{s}$. Each column is a linear combination of at most $\loc{s}$ other columns. 
Let $\mathcal{I}^{\langle s \rangle} \subset \locset{s}$ be the set of indexes of all repair columns indexed by $j_1^{\langle s \rangle},j_2^{\langle s \rangle}, \dots,j_{\kappa_s}^{\langle s \rangle}$ but without the indexes themselves.
\item[s+2)] Let $ \mathcal{I} \defeq \cup_{i \in \inter{s}} \mathcal{I}^{\langle i \rangle}$. Then $| \mathcal{I}| < k$ and we have $\rank(\mathbf{G}_{\mathcal{I}}) < k$.
\item[s+3)] Enlarge $\mathcal{I}$ to the set $\mathcal{I}'$, s.t. $\rank(\mathbf{G}_{\mathcal{I}'}) = k-1$, but without using $\left\{ j_1^{\langle i \rangle},j_2^{\langle i \rangle},\dots,j_{\kappa_i}^{\langle i \rangle} \right\}_{i \in \inter{s}}$.
\item[s+4)] Define 
\begin{equation}
\begin{split}
\mathcal{U} & \defeq \left\{ \mathcal{I}' \cup \left\{ j_1^{\langle 1 \rangle},j_2^{\langle 1 \rangle},\dots,j_{\kappa_s}^{\langle s \rangle} \right\} \right\}, 
\end{split}
\end{equation}
where $| \mathcal{U}| \geq k-1 + \sum_{i \in \inter{s}} \kappa_i $.
\end{enumerate}
We have $\rank(\mathbf{G}_{\mathcal{U}}) \leq k-1$. From Lemma~\ref{lem_RankGeneratorMatrix}, we know that $|\mathcal{U}|$ can be upper bounded, i.e.:
\begin{align*}
& k -1 + \sum_{i \in \inter{s}} \kappa_i \leq n-d, 
\end{align*}
and this implies for the minimum Hamming distance of an $((n_1, \loc{1}),(n_2, \loc{2}),\dots,(n_{s}, \loc{s}))$-local code
\begin{align*}
d & \leq n \text{-} k \text{+} 1 - \sum_{i \in \inter{s \minus 1}} \left \lceil \frac{n_i}{\loc{i}+1} \right \rceil \text{-} \left \lfloor \frac{k \text{-} 1 - \sum \limits_{i \in \inter{s \minus 1}} \loc{i}\left \lceil \frac{n_i}{\loc{i}+1} \right \rceil }{\loc{s}} \right \rfloor \\
& = n \text{-} k \text{+} 2 - \sum_{i \in \inter{s \minus 1}} \left \lceil \frac{n_i}{\loc{i}+1} \right \rceil \text{-} \left \lceil \frac{k - \sum \limits_{i \in \inter{s \minus 1}} \loc{i}\left \lceil \frac{n_i}{\loc{i}+1} \right \rceil }{\loc{s}} \right \rceil.
\end{align*}
In case the assumption does not hold in Step $j)$, i.e., $\sum_{i \in \inter{j}} \loc{i} \lceil n_i/(\loc{i}+1) \rceil \geq k-1$, the bound becomes the bound for an $((n_1, \loc{1}),(n_2, \loc{2}), \dots , ( \sum_{i \in \inter{j,s}} n_{i}, \loc{j}))$-local code. Clearly, an $((n_1, \loc{1}), (n_2, \loc{2}), \dots , (\sum_{i \in \inter{j,s}} n_{i}, \loc{j}))$-local code is also an $((n_1, \loc{1}), (n_2, \loc{2}), \dots , (n_{s}, \loc{s}))$-local code, because $\loc{1} < \loc{2} < \cdots < \loc{j} < \cdots < \loc{s}$.
\end{proof}

\begin{proof}[Proof of Lemma~\ref{lem_ShorteningSeveral}]
For the  first case, we obtain from~\eqref{eq_DiffLocShortCaseOne} with $\loc{\alpha-1} =\loc{\alpha}-1$:
\begin{align}
& \sum_{i \in  \inter{s-1}} \left \lceil \frac{n'_i}{\loc{i}'+1} \right \rceil \nonumber \\
& = \sum_{\substack{i \in  \inter{s-1}\setminus \\ \{\alpha-1, \alpha \}}} \left \lceil \frac{n_i'}{\loc{i}'+1} \right \rceil + 
\left \lceil \frac{n_{\alpha-1}'}{\loc{\alpha-1}'+1} \right \rceil +  \left \lceil \frac{n_{\alpha}'}{\loc{\alpha}'+1} \right \rceil \label{eq_FirstCasea}.
\end{align}
Inserting the expressions of $n_i'$, $n_{\alpha-1}'$ and $n_{\alpha}'$ as in~\eqref{eq_DiffLocShortCaseOne} into~\eqref{eq_FirstCasea} leads to:
\begin{align}
& =  \sum_{\substack{i \in  \inter{s-1}\setminus \\ \{\alpha-1, \alpha \}}} \left \lceil \frac{n_i}{\loc{i}+1} \right \rceil  + \left \lceil \frac{n_{\alpha-1}+\loc{\alpha}}{\loc{\alpha-1}+1} \right \rceil + \left \lceil \frac{n_{\alpha}-\loc{\alpha}-1}{\loc{\alpha}+1} \right \rceil \nonumber \\
& = \sum_{\substack{i \in  \inter{s-1}\setminus \\ \{\alpha-1, \alpha \}}} \left \lceil \frac{n_i}{\loc{i}+1} \right \rceil + \left \lceil \frac{n_{\alpha-1}}{\loc{\alpha-1}+1} \text{+} \frac{\loc{\alpha}}{\loc{\alpha}} \right \rceil + \left \lceil \frac{n_{\alpha}}{\loc{\alpha}+1} \minusb \frac{\loc{\alpha}+1}{\loc{\alpha}+1} \right \rceil \nonumber \\
&  = \sum_{\substack{i \in  \inter{s-1}}} \left \lceil \frac{n_i}{\loc{i}+1} \right \rceil. \label{eq_FirstCaseRelationOne}
\end{align}
The second relevant term can be expressed as:
\begin{align}
& \sum_{i \in  \inter{s-1}} \loc{i}'  \left \lceil \frac{n'_i}{\loc{i}'+1} \right \rceil \nonumber \\
& = \sum_{\substack{i \in  \inter{s-1}\setminus \\ \{\alpha-1, \alpha \}}} \loc{i}' \left \lceil \frac{n_i'}{\loc{i}'+1} \right \rceil + \loc{\alpha-1}' \left \lceil \frac{n_{\alpha-1}'}{\loc{\alpha-1}'+1} \right \rceil + \loc{\alpha}' \left \lceil \frac{n_{\alpha}'}{\loc{\alpha}'+1} \right \rceil,  \label{eq_FirstCaseb}
\end{align}
and also with~\eqref{eq_DiffLocShortCaseOne} for the expressions of $n_i'$, $n_{\alpha-1}'$ and $n_{\alpha}'$ inserted in~\eqref{eq_FirstCaseb} leads to:
\begin{align}
& = \sum_{\substack{i \in  \inter{s-1}\setminus \\ \{\alpha-1, \alpha \}}} \loc{i} \left \lceil \frac{n_i}{\loc{i}+1} \right \rceil + \loc{\alpha-1} \left \lceil \frac{n_{\alpha-1}+\loc{\alpha}}{\loc{\alpha-1}+1} \right \rceil \nonumber \\
& \qquad + \loc{\alpha} \left \lceil \frac{n_{\alpha}-\loc{\alpha}-1}{\loc{\alpha}+1} \right \rceil \nonumber \\
&  = \sum_{\substack{i \in  \inter{s-1}\setminus \\ \{\alpha-1, \alpha \}}} \loc{i} \left \lceil \frac{n_i}{\loc{i}+1} \right \rceil + (\loc{\alpha}-1) \left \lceil \frac{n_{\alpha-1}}{\loc{\alpha-1}+1} + \frac{\loc{\alpha}}{\loc{\alpha}} \right \rceil \nonumber \\
& \qquad  + \loc{\alpha}\left \lceil \frac{n_{\alpha}}{\loc{\alpha}+1} - \frac{\loc{\alpha}+1}{\loc{\alpha}+1} \right \rceil \nonumber \\
&  = \sum_{\substack{i \in  \inter{s-1}}} \loc{i} \left \lceil \frac{n_i}{\loc{i}+1} \right \rceil - 1. \label{eq_FirstCaseRelationTwo}
\end{align}
For the second case, i.e., $\loc{\alpha-1} \neq \loc{\alpha}-1$, we get for 
\begin{align}
& \sum_{i \in  \inter{s-1} \cup \{ \iota \} } \left \lceil \frac{n'_i}{\loc{i}'+1} \right \rceil \nonumber \\
& = \sum_{\substack{i \in  \inter{s-1}\setminus \\ \{\alpha \}}} \left \lceil \frac{n_i'}{\loc{i}'+1} \right \rceil  + \left \lceil \frac{n_{\alpha}'}{\loc{\alpha}'+1} \right \rceil + \left \lceil \frac{n_{\iota}'}{\loc{\iota}'+1} \right \rceil \label{eq_SecondCasea} 
\end{align}
Inserting the expressions of $n_i'$, $n_{\alpha}'$ and $n_{\iota}'$ as in~\eqref{eq_DiffLocShortCaseTwo} into~\eqref{eq_SecondCasea} leads to:
\begin{align} 
& = \sum_{\substack{i \in  \inter{s-1}\setminus \\ \{\alpha \}}} \left \lceil \frac{n_i}{\loc{i}+1} \right \rceil  + \left \lceil \frac{n_{\alpha} - \loc{\alpha} -1}{\loc{\alpha}+1} \right \rceil + \left \lceil \frac{\loc{\alpha}}{\loc{\alpha}-1+1} \right \rceil \nonumber \\
& = \sum_{\substack{i \in  \inter{s-1}\setminus \\ \{\alpha \}}} \left \lceil \frac{n_i}{\loc{i}+1} \right \rceil  + \left \lceil \frac{n_{\alpha} }{\loc{\alpha}+1} - \frac{\loc{\alpha} + 1}{\loc{\alpha}+1} \right \rceil + \left \lceil \frac{\loc{\alpha}}{\loc{\alpha}} \right \rceil \nonumber \\
& = \sum_{\substack{i \in  \inter{s-1}}} \left \lceil \frac{n_i}{\loc{i}+1} \right \rceil. \label{eq_SecondCaseRelationOne}
\end{align}
The second relevant expression is:
\begin{align}
& \sum_{i \in  \inter{s-1} \cup \{ \iota \} } \loc{i}' \left \lceil \frac{n'_i}{\loc{i}'+1} \right \rceil \nonumber \\
& = \sum_{\substack{i \in  \inter{s \minus 1}\setminus \\ \{\alpha \}}} \loc{i}' \left \lceil \frac{n_i'}{\loc{i}'+1} \right \rceil  + \loc{\alpha}'\left \lceil \frac{n_{\alpha}'}{\loc{\alpha}'+1} \right \rceil + \loc{\iota}' \left \lceil \frac{n_{\iota}'}{\loc{\iota}'+1} \right \rceil, \label{eq_SecondCaseb}
\end{align}
and with~\eqref{eq_DiffLocShortCaseTwo} for the expressions of $n_i'$, $n_{\alpha}'$ and $n_{\iota}'$ inserted in~\eqref{eq_SecondCaseb} leads to:
\begin{align}
& = \sum_{\substack{i \in  \inter{s-1}\setminus \\ \{\alpha \}}} \loc{i} \left \lceil \frac{n_i}{\loc{i}+1} \right \rceil + \loc{\alpha} \left \lceil \frac{n_{\alpha}-\loc{\alpha}-1}{\loc{\alpha}+1} \right \rceil  \nonumber \\
& \qquad + (\loc{\alpha}-1) \left \lceil \frac{\loc{\alpha}}{\loc{\alpha}-1+1} \right \rceil \nonumber \\
& = \sum_{\substack{i \in  \inter{s-1}\setminus \\ \{\alpha \}}} \loc{i} \left \lceil \frac{n_i}{\loc{i}+1} \right \rceil + \loc{\alpha} \left \lceil \frac{n_{\alpha}}{\loc{\alpha}+1} \right \rceil - \loc{\alpha} + (\loc{\alpha}-1) \left \lceil \frac{\loc{\alpha}}{\loc{\alpha}} \right \rceil \nonumber \\
&  = \sum_{\substack{i \in  \inter{s-1}}} \loc{i} \left \lceil \frac{n_i}{\loc{i}+1} \right \rceil - 1. \label{eq_SecondCaseRelationTwo}
\end{align}
The minimum Hamming distance of the shortened code for both cases is with~\eqref{eq_SingletonExp1}:
\begin{equation} \label{eq_ShortenedAllCases}
\begin{split}
d & \leq n' \text{-} k' \text{+} 2 \text{-} \sum_{\substack{i \in  \inter{s \minus 1}}} \left \lceil \frac{n'_i}{\loc{i}'\text{+}1} \right \rceil \text{-} \left \lceil \frac{k'\text{-} \left( \sum  \limits_{i \in \inter{s \minus 1}} \loc{i}' \left \lceil \frac{n_i'}{\loc{i}' \text{+}1} \right \rceil \text{-} 1 \right)}{\loc{s}'} \right \rceil. 
\end{split}
\end{equation}
With~\eqref{eq_FirstCaseRelationOne}, \eqref{eq_FirstCaseRelationTwo}, \eqref{eq_SecondCaseRelationOne} and \eqref{eq_SecondCaseRelationTwo} inserted in~\eqref{eq_ShortenedAllCases} gives:
\begin{align*}
d & \leq n-1 - (k-1) + 2 - \sum_{\substack{i \in  \inter{s-1}}} \left \lceil \frac{n_i}{\loc{i}+1} \right \rceil \\
& \quad - \left \lceil \frac{(k-1) -\left( \sum  \limits_{i \in \inter{s-1}} \loc{i}\left \lceil \frac{n_i}{\loc{i}+1} \right \rceil - 1 \right)}{\loc{s}} \right \rceil \\
& = n \minusb k \text{+} 2 \minusb \sum_{\substack{i \in  \inter{s \minus 1}}} \left \lceil \frac{n_i}{\loc{i} \text{+}1} \right \rceil \minusb \left \lceil \frac{k \minusb \sum \limits_{i \in \inter{s \minus 1}} \loc{i}\left \lceil \frac{n_i}{\loc{i}+1} \right \rceil }{\loc{s}} \right \rceil,
\end{align*}
which is the minimum Hamming distance of the original $\LIN{n}{k}{d}{q}$ $((n_1, \loc{1}),(n_2, \loc{2}),\dots,(n_{s}, \loc{s}))$-local Singleton-optimal ML-LRC.
\end{proof}

\begin{proof}[Proof of Thm.~\ref{thm_AlphabetDiffLocalitiesSeveral}]
It follows from Lemma~\ref{lem_DifferentLocalities} and Lemma~\ref{lem_SeveralEntropySets} for $s$ disjoint sets and the expression as in~\eqref{eq_CMboundSeveralLocalities} follows.
The values of $t_i$ can be bounded by $\min(\lceil n_i/(\loc{i}+1) \rceil , \lfloor (k-1-\prod_{j \in \inter{s-1}\setminus \{i\}} t_jr_j)/\loc{i} \rfloor)$ for all $i \in \inter{s-1}$. We assume that $ \sum_{i \in \inter{s-1}}\loc{i}  \lceil n_i / (\loc{i}+1) \rceil < k-1$ and therefore we obtain~\eqref{eq_CondRunning1Several}. The minimizing value for $t_{s}$ follows from the fact that for $ t_{s} > t_{s}^*$ is greater than $k$ and therefore:
\begin{align*}
t_{s} & \leq \left \lfloor \frac{k-1-\sum \limits_{i \in \inter{s-1}}t_i \loc{i}}{\loc{s}} \right \rfloor.
\end{align*}
\end{proof}
\begin{proof}[Proof of Corollary~\ref{cor_SingletonVSAlphabetSeveral}]
We again first bound the dimension $k_{\text{opt}}^{(q)}$ by the locality-unaware Singleton bound.
\begin{align} 
k  & \leq \min_{t_1,t_2,\dots,t_{s}} \left( \sum_{i \in \inter{s}} t_i \loc{i} \text{+}  k_{\text{opt}}^{(q)}\left(n \text{-} \sum_{i \in \inter{s}} \min(n_i,t_i(\loc{i}\text{+}1)),d \right) \right) \nonumber \\
& = \sum_{i \in \inter{s}} t_i \loc{i} + n - \sum_{i \in \inter{s}} \min(n_i, t_i(\loc{i}+1)) -d+1 \nonumber \\
& = n-d+1-\sum_{i \inter{s}} t_i \label{eq_FirstCMBoundSeveral}.
\end{align}
Assume that the minimum Hamming distance contradicts the Singleton-like bound, i.e.:
\begin{equation*}
 d > n - k + 2 - \sum_{i \in \inter{s \minus 1}} \left \lceil \frac{n_i}{\loc{i}+1} \right \rceil - \left \lceil \frac{k- \sum \limits_{i \in \inter{s \minus 1}} \loc{i}\left \lceil \frac{n_i}{\loc{i}+1} \right \rceil }{\loc{s}} \right \rceil,
\end{equation*}
and inserted into~\eqref{eq_FirstCMBoundSeveral} gives:
\begin{align*} 
& n-d+1-\sum_{i \in \inter{s}} t_i \\ 
& \leq n - n + k - 2 + \sum_{i \in \inter{s \minus 1}} \left \lceil \frac{n_i}{\loc{i}+1} \right \rceil \\
& \quad - \left \lceil \frac{k- \sum \limits_{i \in \inter{s \minus 1}} \loc{i}\left \lceil \frac{n_i}{\loc{i}+1} \right \rceil }{\loc{s}} \right \rceil + 1 -\sum_{i \in \inter{s}} t_i \\
& = k \minusb 1 \text{+} \sum_{i \in \inter{s \minus 1}} \left \lceil \frac{n_i}{\loc{i}+1} \right \rceil \minusb \left \lceil \frac{k \minusb \sum \limits_{i \in \inter{s \minus 1}} \loc{i}\left \lceil \frac{n_i}{\loc{i}+1} \right \rceil }{\loc{s}} \right \rceil \minusb \sum_{i \in \inter{s}} t_i,
\end{align*} 
which is a contradiction for the maximal values of $t_i^*, \forall i \in \inter{s}$ as in~\eqref{eq_CondRunning1Several} and \eqref{eq_CondRunning2Several}.
\end{proof}
\end{document}